\renewcommand{\theequation}{\thesection.\arabic{equation}}
\newtheorem{thm}{Theorem}[section]
\newtheorem{prop}[thm]{Proposition}
\newtheorem{defn}{Definition}
\newtheorem{cond}{Condition}
\newtheorem{rmk}[thm]{Remark}
\newtheorem{eg}{Example}
\renewcommand{\Re}{{\mbox{Re}}}
\newcommand{\Real}{\mathbb R}
\newcommand{\comp}{\mathbb C}
\newcommand{\Ord}{\mathcal{O}}
\newcommand{\PL}{\mathcal{P}_L}
\newcommand{\PH}{\mathcal{P}_H}
\newcommand{\HL}{\mathcal{H}_L}
\newcommand{\HH}{\mathcal{H}_H}
\newcommand{\bra}[1]{\left(#1\right)}
\renewcommand\appendix{\par
  \setcounter{section}{0}
  \setcounter{subsection}{0}
  \setcounter{figure}{0}
  \setcounter{table}{0}
  \renewcommand\thesection{Appendix \Alph{section}}
  \renewcommand\theequation{\Alph{section}.\arabic{equation}}
  \renewcommand\thefigure{\Alph{section}.\arabic{figure}}
  \renewcommand\thetable{\Alph{section}.\arabic{table}}
  \renewcommand\thethm{\Alph{section}.\arabic{thm}}
}
\numberwithin{equation}{section}
\title{Towards a Theory of Stable Super-Resolution: Model-Based Formulation and Stability Analysis}
 \author{
Zetao Fei
\thanks{Department of Mathematics, 
HKUST,  Clear Water Bay, Kowloon, Hong Kong (zfei@connect.ust.hk).}
and Hai Zhang
\thanks{Department of Mathematics, 
HKUST,  Clear Water Bay, Kowloon, Hong Kong (haizhang@ust.hk). H. Zhang was partially supported by Hong Kong RGC grant GRF 16304621 and NSFC grant 12371425.}
}
\begin{document}

\maketitle

\begin{abstract}
In mathematics, a super-resolution problem can be formulated as acquiring high-frequency data from low-frequency measurements. This extrapolation problem in the frequency domain is well-known to be unstable. We propose a model-based super-resolution framework (Model-SR) for solving the super-resolution problem and analyzing its stability, aiming to narrow the gap between limited theory and the broad empirical success of super-resolution methods. The key rationale is that, to be determined by its low-frequency components, the target signal must possess a low-dimensional structure. Instead of assuming that the signal itself lies on a low-dimensional manifold in the signal space, we assume that it is generated from a model with a low-dimensional parameter space. This shift of perspective allows us to analyze stability directly through the model parameters. Within this framework, we can recover the signal by solving a nonlinear least square problem and achieve super-resolution by extracting its high-frequency components. Theoretically, the resolution-enhancing map is proven to have Lipschitz continuity, with a constant that depends crucially on parameter separation conditions; consequently, measurements generated by well-separated parameters yield stable reconstructions. This separation condition can be effectively enforced via sparsity modeling, which requires using the minimal number of parameters to represent the measured signal, thereby highlighting the role of sparsity in the stability of super-resolution. Moreover, the Lipschitz constant grows with the high-frequency cutoff, ultimately rendering extrapolation ineffective beyond a certain threshold. We apply the general theory to three concrete models and give the stability estimates for each model. Numerical experiments are conducted to show the super-resolution behavior of the proposed framework. The model-based mathematical framework can be extended to problems with similar structures. 

\bigskip

Keywords: super-resolution, stability, optimization, sparsity. 

\bigskip

MSC: 94-10, 65Z05  

\end{abstract}

\section{Introduction}{\label{sec: introduction}}
Appearing in different literature, super-resolution mainly refers to the techniques that enhance the resolution of signals or images. Since the birth of the microscope, super-resolution has been a central problem for imaging systems for about three centuries. In wave-based imaging systems, the resolution is limited due to the diffraction nature of the wave. The resolution limit can be characterized by the Rayleigh length and depends on the cutoff frequency of the system. Super-resolution techniques are therefore widely desired in imaging-related fields such as geophysics \cite{khaidukov2004diffraction}, medical imaging \cite{greenspan2009super}, radar imaging \cite{odendaal1994two}, microscopy \cite{rust2006sub}, etc. 

In mathematical literature, super-resolution usually refers to the stable recovery of high-frequency information from low-frequency measurements. For a general compactly supported function, its Fourier transform is analytic, and hence the extrapolation problem is uniquely solvable in the absence of noise. However, it is notoriously ill-posed in practice, as even small noise leads to severe instability \cite{Vessella1999ACD,demanet2019stable,isaev2020h}. Stable recovery is nevertheless possible when the signal admits additional structure (e.g., point sources, piecewise constant profiles). Another commonly used formulation, especially in image processing, is to reconstruct a high-resolution signal from its low-resolution counterpart. Since the Fourier transform connects the physical and frequency domains, these two viewpoints are essentially equivalent. In what follows, we adopt the frequency-domain formulation and present a general framework for analyzing and solving super-resolution, with an emphasis on stability. For clarity, we focus on the one-dimensional case.

\subsection{Problem outline in the frequency domain}
Let $h$ be the sampling step size in the frequency domain, and $\omega_k = kh$, for $k\in\mathbb Z$. Denote the low-resolution sampling points as $\{\omega_k\}_{k=-K_L}^{K_L}$ and high-resolution sampling points as $\{\omega_k\}_{k=-K_H}^{K_H}$, where $K_L$ and $K_H$ denote the low- and high-frequency cutoffs, respectively. Assume $K_H>K_L$ and define the super-resolution factor (SRF) as 
\begin{align}
    \operatorname{SRF}:= \frac{K_H}{K_L}.
\end{align}

Let $\mathcal{M}\subset \mathcal{S}'(\Real)$ be the signal space in the physical domain, where $\mathcal{S}'(\Real)$ denotes the space of tempered distributions. The sampling in the frequency domain can be written as follows. For $\psi\in \mathcal{M}$, we define the low-resolution sampling operator $G_L: \mathcal{M}\rightarrow\comp^{2K_L+1}$ as
\begin{align}\label{def: GL}
    G_L(\psi) = (g_{-K_L}(\psi),g_{-K_L+1}(\psi),\cdots,g_{K_L}(\psi)),
\end{align}
with 
\begin{align}
    g_{k}(\psi) := \mathcal{F}[\psi](\omega_k) = \int_{\Real} \psi(x)e^{-2\pi i \omega_k x}dx.
\end{align}
Similarly, we define the high-resolution sampling operator $G_H:\mathcal{M}\rightarrow \comp^{2K_H+1}$ as 
    \begin{align}\label{def: GH}
        G_H(\psi) = (g_{-K_H}(\psi),g_{-K_H+1}(\psi),\cdots,g_{K_H}(\psi)).
    \end{align}

We assume that $G_L$ and $G_H$ are continuous.
For $\psi,\tilde{\psi}\in \mathcal{M}$, We say $\psi$ is sampling equivalent to $\tilde{\psi}$ if $g_k(\psi)=g_k(\tilde{\psi})$ for all $k\in\mathbb Z$. We define the signal space as an equivalent class, $[\mathcal{M}]:= \mathcal{M}/_{\sim}$. The motivation for such a definition can be seen in the example below.

\begin{eg}\label{eg: 1}
    Let $\omega_k=k$, $\theta\in[0,1]$, $\psi(x)=\delta_{\theta}$, and $\tilde{\psi}(x) = \delta_{1+\theta}$. Then, we can calculate that $g_k(\psi)=g_k(\tilde{\psi})$ for all $k\in\mathbb Z$ due to the fixed sampling step size. Thus $\psi$ and $\tilde{\psi}$ are in the same equivalent class and they are viewed as the same signal. 
    
\end{eg}

For a slight abuse of notation, we use the notation $\mathcal{M}$ for signal space from now on.

We define the low- and high-resolution signal space as $\HL := G_L(\mathcal{M})$ and $\HH:=G_H(\mathcal{M})$ respectively. 
Let $\mathcal{Q}:\HH\rightarrow\HL$ be the downsampling operator satisfying 
\begin{align}\label{commu:Q}
    \mathcal{Q}\circ G_H = G_L.
\end{align}
The diagram shown in Figure \ref{fig: signal model} commutes.
\begin{figure}[ht]
    \centering    
    \begin{tikzcd}[row sep=small,column sep = large]
                                     &\HL \arrow[from=dd, "\mathcal{Q}"]\\ 
        \mathcal{M}\arrow[ur, "G_L"] \arrow[dr, "G_H",swap]&                   \\ 
                                    &\HH 
    \end{tikzcd}
    \caption{Signal space, low- and high-resolution spaces, and related maps.}
    \label{fig: signal model}
\end{figure}

Generally speaking, super-resolution aims to find a resolution-enhancing map, $\mathcal{L}:\HL\rightarrow\HH$, satisfying the following condition:
\begin{align}\label{commu:L}
    \mathcal{L}\circ G_L = G_H.
\end{align}
Combining (\ref{commu:Q}) and (\ref{commu:L}), we have $(\mathcal{L}\mathcal{Q})\circ G_H=G_H$, which implies that super-resolution essentially aims to find a generalized left inverse of the downsampling operator $\mathcal{Q}$. 

To ensure the uniqueness of super-resolution in the absence of noise, we propose the following condition:
\begin{cond}\label{cond: injective}
    The low-resolution sampling operator $G_L$ is injective.
\end{cond}

The condition above ensures that the signal profile is identifiable from the measurements. Otherwise, if there exist distinct profiles $\psi \neq \tilde{\psi} \in \mathcal{M}$ with $G_L(\psi) = G_L(\tilde{\psi})$, then for some integer $K_H$ we must have $G_H(\psi) \neq G_H(\tilde{\psi})$. In that case, the resolution-enhancing map $\mathcal{L}$ is not well-defined, and the super-resolution problem becomes non-unique in the absence of Condition \ref{cond: injective}. To ensure uniqueness and identifiability, we assume throughout that
\[
2K_L + 1 > \dim(\mathcal{M}),
\]
where $\dim(\mathcal{M})$ denotes the intrinsic degrees of freedom of the signal space $\mathcal{M}$ (see formal definition in Definition \ref{def: dim M}). This condition asserts that $\mathcal{M}$ has low-dimensional structure.

However, identifiability alone does not guarantee stable super-resolution. Additional structural conditions on $\mathcal{M}$ are required (see Section \ref{subsec: intro results}). The following simple example also illustrates this point.

\begin{eg}
    Define \[\mathcal{M}_1 = \{\psi\in \mathcal{S}': \ g_k(\psi)=0,\ \forall |k|>K_L  \}.\]
    Then Condition \ref{cond: injective} holds. For $\psi_1\in\mathcal{M}_1$ and $K_H>K_L$, the resolution-enhancing map is given by 
    \begin{align}
        \mathcal{L}\bra{G_L(\psi)} = (0,\cdots,0,g_{-K_L}(\psi),\cdots,g_{K_L}(\psi),0,\cdots,0).
    \end{align}
    In this case, super-resolution is impossible since no high-frequency information can be obtained from the low-frequency part.
\end{eg}

\subsection{Introduction to main results}\label{subsec: intro results}
In this paper, we study the stability of super-resolution in the presence of noise. We propose a model-based super-resolution framework (Model-SR; see Figure 2) to solve the super-resolution problem and to analyze its stability. The key idea is to exploit the low-dimensional latent structure of the target signals. Specifically, we introduce the following definition.

\begin{defn}
We say that the signal space $\mathcal{M}$ is modelable if there exist a compact parameter space $\Theta\subset\mathbb{R}^m$ and a map $\mathcal{P}:\Theta\to \mathcal{M}$ such that $\mathcal{P}$ is surjective and continuous. We call $\mathcal{P}$ the model map, and $(\Theta,\mathcal{P})$ a modeling pair. We define the low-resolution map as $\mathcal{P}_L = G_L\circ\mathcal{P}$ and the high-resolution map as $\mathcal{P}_H = G_H\circ\mathcal{P}$.
\end{defn}

\begin{figure}[ht]
    \centering    
    \begin{tikzcd}[row sep=small,column sep = large]
              &        &\HL \arrow[dd,shift left=1.5,red,"\mathcal{L}"] \arrow[from=dd,"\mathcal{Q}"]\\ 
        \Theta\arrow[urr,bend left=15, "\PL"] 
        \arrow[drr, bend right=15,"\PH",swap]\arrow[r,"\mathcal{P}"]&
        \mathcal{M}\arrow[ur,"G_L"]\arrow[dr,"G_H",swap]&         \\ 
                    &                &\HH 
    \end{tikzcd}
    \caption{Model-based super-resolution framework.}
    \label{fig: model-based SR}
\end{figure}
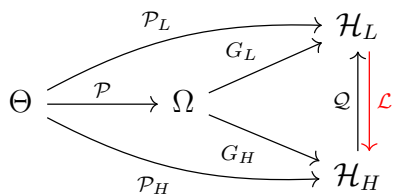

This definition formalizes the assumption that the signal space admits a low-dimensional structure parameterized by $\Theta$. One may extend this viewpoint by assuming that $\mathcal{M}$ is a low-dimensional manifold embedded in an infinite dimensional space. Here, we suppress this generality to focus on the essential difficulties of the super-resolution problem. Moreover, for reconstructing a specific signal, a local chart typically suffices.

Given a modeling pair $(\Theta,\mathcal{P})$ for $\mathcal{M}$, and a noisy measurement $y = \PL(\theta^*)+W$, where the noise $W$ satisfies $\|W\|<\sigma$. The super-resolution task is to reconstruct the high-frequency data $\PH(\theta^*)$.  
We assume that the noise level $\sigma>0$ is known in our theoretical analysis. Beyond this bound, we make no assumptions on $W$. Although leveraging prior information about the noise for denoising is practical in many super-resolution tasks, it is not the focus of this paper. 

Under our model-based super-resolution framework, the problem is solved in the following two steps: 

\begin{enumerate}
  \item [Step 1]\textbf{Parameter estimation}.
We define the set of $(\Theta,\sigma)$-admissible parameters for a measurement $y$ (cf.~Definition~\ref{def-addmi}) as all parameters consistent with $y$ under noise level $\sigma$. This admissible set can be written informaly as $\mathcal{P}_{L, \sigma}^{-1} (y)$, where $\mathcal{P}_{L, \sigma}^{-1}$ denotes the preimage operator, for the ease of notation. In the absence of additional prior information about the noise, all $(\Theta,\sigma)$-admissible parameters are treated as equally plausible instances of the ground truth $\theta^*$. However, different estimators may exhibit different stability guarantees, depending on specific separation conditions (see Section~\ref{sec: point source}-\ref{sec: beyond pts} for three concrete models). 

  \item [Step 2]\textbf{Resolution enhancement.}
  Given the estimate $\hat{\theta}$ from Step~1, we form the super-resolved signal as $\mathcal{P}_H(\hat{\theta})$.
\end{enumerate}

In summary, under our model-based framework, the super-resolution task is to construct the resolution-enhancing map
\begin{align}\label{eq: L}
\mathcal{L} = \PH \circ \mathcal{P}_{L, \sigma}^{-1}. 
\end{align}

In Theorem \ref{cor: repre}, we derive local Lipschitz stability for the super-resolution problem in Step 1 for a ideal modeling pair where the model map $\mathcal{P}$ is a local bijection. We further derive quantitative bounds for several concrete models, including point sources, finite rate of innovation (FRI) signals, and certain classes of continuous signals. Our results show that separation conditions on the parameters are necessary for stability: specifically, the ground truce parameter ${\theta^*}$ must satisfy suitable separation constraints. 

Following the above results, we conclude that a general model-based super-resolution framework does not, by itself, guarantee stability or robustness. Additional modeling conditions are needed. A classical route is to impose a separation condition to ensure robustness. In practice, however, choosing an appropriate separation threshold, which depends on the noise level and the specific signal model, is challenging and often lacks an explicit form. This condition can instead be enforced effectively via sparsity. More specifically, we propose sparse $\sigma$-compatible modeling pairs; see Definition \ref{def-sparse-model}. Under a compatible modeling pair $(\tilde{\Theta}, \tilde{\mathcal{P}})$, the parameter estimation step (Step 1) is replaced by solving the following sparsity-promoting non-convex optimization problem \begin{equation}\label{opt_problem}
  \min_{\theta \in \widetilde{\Theta}} \|\theta\|_{\ell_0}
  \quad \text{subject to} \quad
  \|\widetilde{\mathcal{P}}_L(\theta) - y\| \le \sigma.
\end{equation}
Sparse $\sigma$-compatible modeling pairs can then be constructed  from solutions (not necessarily unique) to this problem.
In Theorem~\ref{cor: modeling}, we establish local Lipschitz stability for the super-resolution problem for $\sigma$-compatible modeling pairs, which include the sparse-compatible ones as a special case.


We also note that we focus throughout on a finite high-frequency cut-off $K_H$. In the limit $K_H \to \infty$, super-resolution reduces to recovering the full signal—equivalently, the ground-truth parameter $\theta^*$. This is a parameter estimation problem, which forms the first step of our model-based super-resolution framework, and it is typically less stable than reconstructing only the high-frequency content up to a finite $K_H$. In the special case of point sources, even determining the exact number of sources is highly ill-posed. By contrast, if we are only interested in frequencies up to the cut-off $K_H$, errors in the estimated number of sources need not affect the recovered band-limited component (cf. Theorems \ref{cor: modeling}), as long as $K_H$ is not too large.  We refer to \cite{LIU2022402,9410626} for related studies on determining the number of sources in the point-source model.


\subsection{Connection with Existing Literature}

Classical studies of super-resolution focus on parameter estimation for the point source model. Originating from Prony’s method \cite{Prony-1795}, subspace techniques such as MUSIC \cite{1143830}, ESPRIT \cite{32276}, and the Matrix Pencil method \cite{hua1990matrix} were developed for high-resolution reconstruction. Their analysis under noise is intricate; see \cite{li2020super,liao2016music,Moitra:2015:SEF:2746539.2746561}. Recent years have also witnessed new variants and extensions of these classical approaches; see, for example, \cite{katz2023decimated,fei2023iff,fei2025scan}. Within our framework, Model-SR reduces to the point source setting when $\mathcal{M}$ consists of point sources, and extends naturally to signals of finite rate of innovation \cite{1003065,doi:10.1137/15M1042280}.

Theoretical advances for the point source model are substantial. Results in \cite{donoho1992superresolution,demanet2015recoverability,batenkov2019super,batenkov2020conditioning,li2021stable} characterize reconstruction stability from a minimax optimality perspective. Recent work introduces the computational resolution limit \cite{9410626,LIU2022402,Liu_2021}, providing quantitative criteria for phase transitions between success and failure under noise; see also \cite{liu2023super,LIU2024101673}. Section \ref{sec: point source} discusses its relation to Model-SR.

In recent years, \emph{compressive sensing} achieved substantial success across a range of practical applications. Its fundamental principle rests on the observation that a measured signal admits a sparse representation with respect to a suitable basis in the signal space \cite{candes2008restricted,candes2008introduction}. The development of compressive sensing subsequently inspired a variety of sparsity-based methodologies for super-resolution, including LASSO, total variation (TV), atomic norm minimization, and B-LASSO \cite{10.2307/2346178,candes2014towards,candes2013super,tang2014near,chi2020harnessing,de2012exact,YANG2018509,doi:10.1137/17M1147822}. Under appropriate minimum-separation conditions, these approaches guarantee exact recovery and stability. They are typically formulated as  
\begin{align}\label{eq: traditional opt.}
    \min_{x\in\mathcal{M}} \|x\|_s 
    \quad \text{s.t. } 
    \|G_Lx - y\|_* \le \gamma,
\end{align}
where \( y \in \mathcal{H}_L \), \(\|\cdot\|_*\) denotes a chosen data-fidelity norm, \(\|\cdot\|_s\) a sparsity-promoting regularizer, and \(\gamma\) a prescribed tolerance.  

In the \emph{Model-SR} framework, introducing a low-dimensional parameterization effectively reduces the dimensionality of the optimization problem, thereby inducing implicit sparsity-promoting regularization. While convex relaxations of \eqref{eq: traditional opt.} ensure stable recovery under strong separation assumptions, Model-SR relaxes these constraints at the expense of solving a generally nonconvex optimization problem. Nevertheless, the sparsity-promoting Model-SR in (\ref{opt_problem}) can be viewed as a form of compressive sensing for nonlinear measurements.


In computer vision, single-image super-resolution (SISR) has been extensively studied. Example-based methods \cite{5459271,freeman2002example,1315043,6751349,yang2012coupled} exploit low-dimensional manifolds underlying images of different resolutions. Classical SISR also includes prediction-based \cite{keys1981cubic,LanczosFilteringinOneandTwoDimensions}, statistical \cite{5430911}, and sparse representation methods \cite{4587647,yang2010image}. Despite this progress, quantitative stability guarantees for super-resolution remain limited. 
In our setting, low- and high-resolution spaces are images of a common signal space under sampling operators, connected by the resolution-enhancing map $\mathcal{L}$ (cf. \eqref{eq: L}). For the concrete examples we study, we derive explicit quantitative stability estimates.


Deep learning has transformed SISR in recent years. SRCNN \cite{7115171} introduced CNN-based approaches mapping low- to high-resolution images. Subsequent work developed deeper networks \cite{kim2016accurate,tai2017image}, U-Net variants \cite{hu2019runet}, adaptive architectures \cite{8306452,dahl2017pixel,9578711}, GAN-based methods \cite{ledig2017photo}, and sparsity-driven models \cite{wang2015deep,7532596,gu2015convolutional}. Surveys include \cite{wang2020deep,yang2019deep}. Connections with Model-SR will be elaborated in Section \ref{sec: extension}. Despite their strong empirical performance, these approaches still lack a rigorous theoretical foundation. 


\subsection{Organization of the paper}
In Section \ref{sec: model-sr}, we introduce the model-based super-resolution framework and the mathematical theory for the proposed framework, with a focus on the stability estimate. We investigate the point source model within Model-SR in Section \ref{sec: point source}. We extend the discussion to signals with a finite rate of innovation and signals having a specific continuous form in the physics domain in Section \ref{sec: beyond pts}. We conduct numerical experiments in Section \ref{sec: Numerical experiments}. In Section \ref{sec: extension}, we discuss several extensions of the Model-SR. The paper concludes with a discussion of the proposed framework in Section \ref{sec: discussion}.

\subsection{Notations}
Throughout the paper, we denote $\|\cdot\|$ the $\ell_2$ norm and $\|\cdot\|_{op}$ the operator norm. We denote $\delta_\theta$ for Dirac measure with support at $\{\theta\}$. For an operator $\mathcal{A}$, we denote $D\mathcal{A}$ the Fr\'echet derivative of $\mathcal{A}$. For a set $U$, $\mathcal{A}|_U$ represents the restriction of $\mathcal{A}$ on $U$. We use the notation $C^k(U,V)$ for $k$-times continuously differentiable functions defined from $U$ to $V$. For matrices $A,B \in \comp^{n\times n}$, $A\preccurlyeq B$ means $B-A$ is positive semi-definite. We use $\sigma_{\min}(A)$ to denote the smallest singular value of $A$. We denote the identity matrix as $\mathcal{I}$ and the identity map as $id$. The notation $B(a,r)$ represents the closed ball centered at $a$ with radius $r$. The notation $m\gtrsim n $ means that there exists a constant $C>0$, such that $m\ge C\cdot n$. We denote the Fourier transform of a function $f(x)$ as $\mathcal{F}[f](\omega)$, defined by $\mathcal{F}[f](\omega) = \int_{\Real} f(x)e^{-2\pi i \omega x}dx$. Finally, we denote $[-\frac{1}{2},\frac{1}{2}]_{*}$ the closed interval $[-\frac{1}{2},\frac{1}{2}]$ equipped with the wrap-around distance 
$d_{\mathbb T}(a,b) = \min_{M\in\mathbb Z}|a-b-M|.$

\section{Model-based Super-resolution Framework}\label{sec: model-sr}
In this section, we develop the mathematical theory of the model-based super-resolution framework (Model-SR). Our presentation is restricted to one dimension for ease of presentation. The generalization to higher dimensions is straightforward.

\subsection{Mathematical Model for Model-SR} \label{subsec: framework}

Recall the definition of a modelable signal space in Section~\ref{subsec: intro results}. We view the signal space $\mathcal{M}$ as a finite-dimensional manifold embedded in the infinite-dimensional space $\mathcal{S}'(\mathbb{R})$. To characterize the dimension of $\mathcal{M}$, we introduce the following definition.

\begin{defn}\label{def: dim M}
    For modelable signal space $\mathcal{M}$, we say that $\mathcal{M}$ has intrinsic dimension $m$ if there exists a modeling pair $(\Theta,\mathcal{P})$ with parameter space $\Theta\subset\Real^m$ and $\operatorname{dim} \Theta = m$ satisfying that for any $\psi\in\mathcal{M}$, there exists a open neighborhood $U_\psi$ and a discrete space $\Lambda_\psi$ such that $\mathcal{P}^{-1}(U_\psi)=\bigsqcup_{d\in \Lambda_\psi} V_d$ and $\mathcal{P}|_{V_d}$ is a bijection for every $d\in \Lambda_\psi$.\footnote{The symbol $\bigsqcup$ denotes the disjoint union.} 
    Further, we refer to such a modeling pair $(\Theta,\mathcal{P})$ as the ideal modeling pair.
    
\end{defn}
\begin{rmk}
    By the continuity of $\mathcal{P}$ and the compactness of $\Theta$, it is easy to show $\mathcal{P}|_{V_d}$ is a homeomorphism for every $d\in \Lambda_\psi$.
\end{rmk}

Notice that in the definition given above, we define the intrinsic dimension through the local bijection (homeomorphism) instead of the global one. This is because of the symmetry property of the model map $\mathcal{P}$, as in the following example.
\begin{eg}
    Let $\Theta=[0,1]^2$, $\theta^{(1)}=(\theta^*_1,\theta^*_2)$, $\theta^{(2)}=(\theta^*_2,\theta^*_1)$, where $\theta^*_1\ne\theta^*_2$. Define $\mathcal{P}(\theta) = \delta_{\theta_1}+\delta_{\theta_2}$. Then, $\mathcal{P}(\theta^{(1)})=\mathcal{P}(\theta^{(2)})$, though $\theta^{(1)}\ne\theta^{(2)}$.
\end{eg}

Now, consider the noisy low-resolution measurement $y\in\comp^{2K_L+1}$ given by
\begin{align}\label{noisy measurement}
    y = G_L (\psi) + W = \PL(\theta^*)+W,
\end{align}
where $\theta^*\in \Theta$, and $W = \bra{W_{-K_L},\cdots,W_{K_L}}$ is the noise vector with $\|W\|<\sigma$. We refer $\theta^*$ as the ground-truth parameter. Note that in practice, the ground-truth parameter may not be uniquely solvable from noisy measurement. 
To relate the noisy signal and the parameter to recover, we introduce the following concept.
\begin{defn} \label{def-addmi}
    Given a low-resolution noisy measurement $y$, for any chosen modeling pair $(\Theta,\mathcal{P})$, we say $\theta\in\Theta$ is a $(\Theta,\sigma)$-admissible if 
    \begin{align}
        \|\PL(\theta)-y \| \le \sigma.
    \end{align}
\end{defn}

It is clear that $(\Theta,\sigma)$-admissible parameters can be solved via the following nonlinear least-squares problem:
\begin{align}
    \min_{\theta \in \Theta}~ \frac{1}{2}\|\mathcal{P}_L(\theta) - y\|^2.
\end{align}



\subsection{Stability Estimate for an Ideal Modeling Pair} \label{subsec: general stability}
Recall that for an ideal modeling pair, the model map $\mathcal{P}: \Theta \to \mathcal{M}$ is a ``local" bijection. We first analyze the local property of the low-resolution map and show the Lipschitz continuity of its inverse. 
Denote 
\[
\|D\PH\|_{op}= \max_{\theta \in U, \|z\| \leq 1}\left\|D\PH(\theta) z\right\|.   
\]

\begin{prop}\label{thm: abstract stability}
    Assume that $U \subset \Real^m$ is a convex compact set. Consider $\PL \in C^1(\Real^m,\HL)$ satisfying that 
    \begin{itemize}
        \item $\PL|_U$ is injective,
        \item $D\PL(\theta)$ is injective for all $\theta\in U$.
    \end{itemize}
   
    Then, for every $\theta,\theta'\in U$, there exists $C_{U}>0$ such that
    \begin{align} \label{stab: theta}
        \|\theta-\theta'\| \le C_{U} \cdot \|\PL(\theta)-\PL(\theta')\|.
    \end{align}
    Further, we have 
    \begin{align}
        \|\PH(\theta)-\PH(\theta') \| \le C_{U}\cdot \|D\PH\|_{op}\cdot \|\PL(\theta)-\PL(\theta')\|.
    \end{align}
\end{prop}

The above proposition is a consequence of Theorem 2.1 (cf. \cite{alberti2022inverse}). For the sake of completeness and the convenience of readers, we offer the proof in Appendix \ref{prf: general stability}. We note that condition (\ref{stab: theta}) plays an analogous role to the Restricted Isometry Property (RIP) in ensuring stable recovery of sparse signals from linear measurements (cf. \cite{candes2008restricted,foucart2013invitation}). A uniformly bounded constant $C_U$ ensures that a parameter $\theta$ can be stably reconstructed from its low-frequency measurement $\mathcal{P}_L(\theta)$.

As a consequence of Proposition \ref{thm: abstract stability}, we have the following Lipschitz stability estimate for $(\Theta,\sigma)$-admissible parameters for the ideal modeling pair.

\begin{thm}\label{cor: repre}
    Assume that $U \subset \Real^m$ is a convex compact set. Consider $\PL\in C^1(\Real^m,\HL)$ satisfying that 
    \begin{itemize}
        \item $\PL|_U$ is injective,
        \item $D\PL(\theta)$ is injective for all $\theta\in U$.
    \end{itemize}
     Let $\hat{\theta}\in U$ be a $(\Theta, \sigma)$-admissible parameter for the noisy measurement (\ref{noisy measurement}), then 
    \begin{align}\label{stability est repre}
        \| \PH(\hat{\theta})-\PH(\theta^*) \| < 2C_{U}\cdot \|D\PH\|_{op}\cdot\sigma.
    \end{align}
\end{thm}

\begin{proof}
    Notice that 
    \begin{align}
        \|\PL(\hat{\theta})-\PL(\theta^*) \| \le \|\PL(\hat{\theta})-y\| +\|\PL(\theta^*)-y\| <2\sigma.
    \end{align}
    By Theorem \ref{thm: abstract stability}, we have 
        \begin{align}
        \|\PH(\hat{\theta})-\PH(\theta^*) \| \le C_{U}\cdot \|D\PH\|_{op}\cdot \|\PL(\hat{\theta})-\PL(\theta^*) \| < 2C_{U}\cdot \|D\PH\|_{op} \cdot \sigma.
    \end{align}
\end{proof}
We note that the Lipschitz constant of the resolution-enhancing map ($\PH\circ\mathcal{P}_{L, \sigma}^{-1}$) can be naturally decomposed into two parts: $C_U$, reflecting the stability of signal-space modeling from low-resolution samples, and $\|D\PH\|_{op}$, governing extrapolation stability in the frequency domain. The former depends on $K_L$, the latter on $K_H$. This decomposition makes explicit  how stability scales with the super‑resolution factor (SRF). The inverse step of reconstructing parameters from low-resolution measurements can be ill‑posed, depending on the parameters to be recovered, and thus requires additional restrictions. For point‑source models, a natural restriction is a minimum‑separation condition on source locations. Super‑resolution theory characterizes the separation thresholds that ensure stable parameter recovery.

\subsection{Stability for Compatible Modeling Pairs}\label{sec: sta_spase}
As discussed in the previous section, the parameter estimation for an ideal modeling pair may still be unstable. Such instability typically arises when the parameters lack a certain \emph{separation property}. For instance, accurately recovering the locations of two point sources whose separation distance falls below the Rayleigh length becomes highly challenging under realistic noise levels.

Here, we interpret the notion of parameters having a \emph{good separation property} as meaning that the parameters can be stably reconstructed from the low-resolution measurements at a given noise level. This concept is closely related to the \emph{computational resolution limit} developed for the point source model, which quantifies the gap between signals generated by $n$ versus $n-1$ sources under noise via a minimum separation condition (see, e.g., \cite{LIU2022402}).

Parameter estimation with poor separation using an ideal modeling pair can lead to instability, making super-resolution unreliable. However, this issue can be mitigated by using a compatible modeling pair with fewer parameters. The underlying idea is that reducing the number of parameters increases their effective separation, thereby improving stability.

\begin{defn} \label{def-sparse-model}
    We call that a modeling pair $(\widetilde{\Theta},\widetilde{\mathcal{P}})$ is $\sigma$-compatible to the the measurement $y$  if $\exists~\tilde{\theta}\in \widetilde{\Theta}$ such that 
    \begin{align}
     \|\widetilde{\mathcal{P}}_L(\tilde{\theta})-y \|\le\sigma.
    \end{align}
    Further, we call $(\widetilde{\Theta}^{(s)},\widetilde{\mathcal{P}}^{(s)})$ a sparse $\sigma$-compatible modeling pair if
    \begin{align*}
        \operatorname{dim}\left(\widetilde{\Theta}^{(s)}\right) = \min_{\widetilde{\Theta}\in \Lambda} \operatorname{dim}\left(\widetilde{\Theta}\right) ,~\Lambda = \left\{ \widetilde{\Theta}: \text{$(\widetilde{\Theta},\widetilde{\mathcal{P}})$ is $\sigma$-compatible to $y$} \right\}
    \end{align*}
\end{defn}
Note that a $\sigma$-compatible modeling pair to a measurement $y$ need not contain the ground truth parameter $\theta^*$ that generates $y$, yet it can still approximate $y$ well. 

Starting from an \emph{over-parameterized} model, that is, when $\dim(\widetilde{\Theta})$ exceeds (or equals) the intrinsic dimension of the signal space, a sparse $\sigma$-compatible modeling pair can be obtained by solving the folllowing optimization problem
\begin{align*}
    \min_{\theta \in \widetilde{\Theta}} \|\theta\|_{\ell_0}
    \quad \text{subject to} \quad
    \|\widetilde{\mathcal{P}}_L(\theta) - y\| \le \sigma,
\end{align*}
and then restricting $\widetilde{\Theta}$ to a minimal subspace that contains a solution to this problem. 
Intuitively, the $\ell_0$-regularization term acts as a \emph{model selector}, favoring the simplest model, namely, the one with the fewest parameters, the sparse $\sigma$-compatible modeling pair. Note that such pairs need not be unique.

We now address the stability of super-resolution for compatible modeling pairs, which includes the sparse compatible case as a special case. The following corollary extends the preceding stability estimate to any compatible modeling pair $(\widetilde{\Theta}, \widetilde{\mathcal{P}})$.

\begin{thm}\label{cor: modeling}
    Let $(\widetilde{\Theta},\widetilde{\mathcal{P}})$ be a $\sigma$-compatible modeling pair to $(\Theta,\mathcal{P})$, and let $\tilde{\theta}\in \widetilde{\Theta}$ satisfy
    \[
    \|\widetilde{\mathcal{P}}_L(\tilde{\theta})-y \|<\sigma.
    \]
    Assume that $\widetilde{U}$ is a convex compact set and that 
    \begin{itemize}
    \item $\widetilde{\mathcal{P}}_L|_{\widetilde{U}}$ is injective.
    \item $D\widetilde{\mathcal{P}}_L|_{\widetilde{U}}$ is injective.
    \item There exists $C_{\tilde{U}}>0$ such that
    \begin{align}
        \|\theta-\theta'\| \le C_{\tilde{U}} \cdot \|\widetilde{\PL}(\theta)-\widetilde{\PL}(\theta')\|.
    \end{align}
    \end{itemize}
 Assume that $\hat{\theta}$ satisfies $\|\widetilde{\mathcal{P}}_L(\hat{\theta})-y \|<\sigma$. Then 
    \begin{align} \label{eq-ph8}
        \| \widetilde{\mathcal{P}}_H(\hat{\theta})-\widetilde{\mathcal{P}}_H(\tilde{\theta}) \| < 2C_{\widetilde U}\cdot \|D\widetilde{P}_H\|_{op}\cdot\sigma.
    \end{align}
\end{thm}

\begin{proof}
    By Theorem \ref{cor: repre}, we have
    \[
    \| \widetilde{\mathcal{P}}_H(\hat{\theta})-\widetilde{\mathcal{P}}_H(\tilde{\theta}) \| \le C_{\widetilde U}\cdot \|D\widetilde{P}_H\|_{op}\cdot \|\widetilde{\mathcal{P}}_L(\hat{\theta})-\widetilde{\mathcal{P}}_L(\tilde{\theta})\| 
    < 2C_{\widetilde U}\cdot \|D\widetilde{P}_H\|_{op}\cdot\sigma..
    \]
\end{proof}

In the theorem above, the Lipschitz constant $C_{\tilde{U}}$ is model-dependent. In particular, compared with the ideal modeling pair, employing a \emph{sparse} $\sigma$-compatible modeling pair can lead to a significant improvement in parameter separability. Consequently, the Lipschitz constant $C_{\tilde{U}}$ may be substantially smaller. Likewise, for the extrapolation component, the reduction in the number of parameters can also contribute to a decrease in the corresponding Lipschitz constant (cf.\ the three concrete models discussed in the subsequent sections). This theoretical characterization underscores the critical role of \emph{sparse model selection} in achieving enhanced stability in the super-resolution procedure.
The following numerical example illustrates this idea more concretely.

\begin{eg}\label{example: 3 sources using 2}
Consider a signal consisting of three point sources,
\[
\delta_{0.1} + \delta_{0.5} + \delta_{0.5 + \Delta},
\]
with $\Delta = 1/40$, corresponding to the Rayleigh length for a cutoff frequency $K_{\mathrm{cut}} = 20$. We observe its low-frequency measurements with $K_L = 5$, and we adopt a modeling pair corresponding to two well-separated point sources. We then set the high-frequency cutoff at $K_H = 30$. The reconstruction results are shown in Figure~\ref{fig: two point approx}. We observe that the two-source approximation captures the main features of the original signal, and the subsequent super-resolution step yields a reasonable high-resolution reconstruction.  As $K_H$ increases beyond a certain threshold, however, the mismatch between the assumed and true modeling pairs becomes apparent, and the super-resolution reconstruction error grows significantly, reflecting the intrinsic ill-posedness of recovering point sources below the Rayleigh length.
\end{eg}

\begin{figure}[!ht]
    \centering
    \subfloat[Real part of the approximation\label{fig: real_part}]{
        \includegraphics[width=0.48\linewidth]{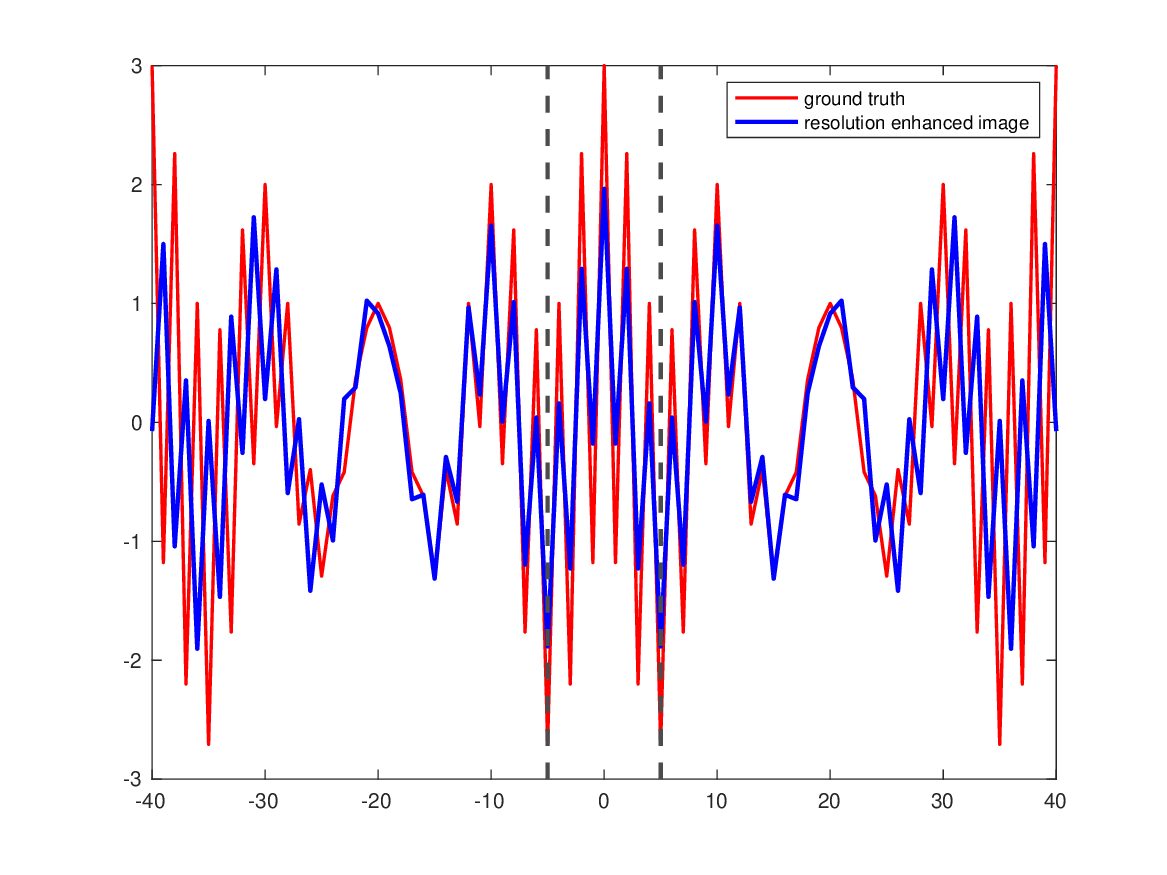}
    }
    \hfill
    \subfloat[Imaginary part of the approximation\label{fig: imag_part}]{
        \includegraphics[width=0.48\linewidth]{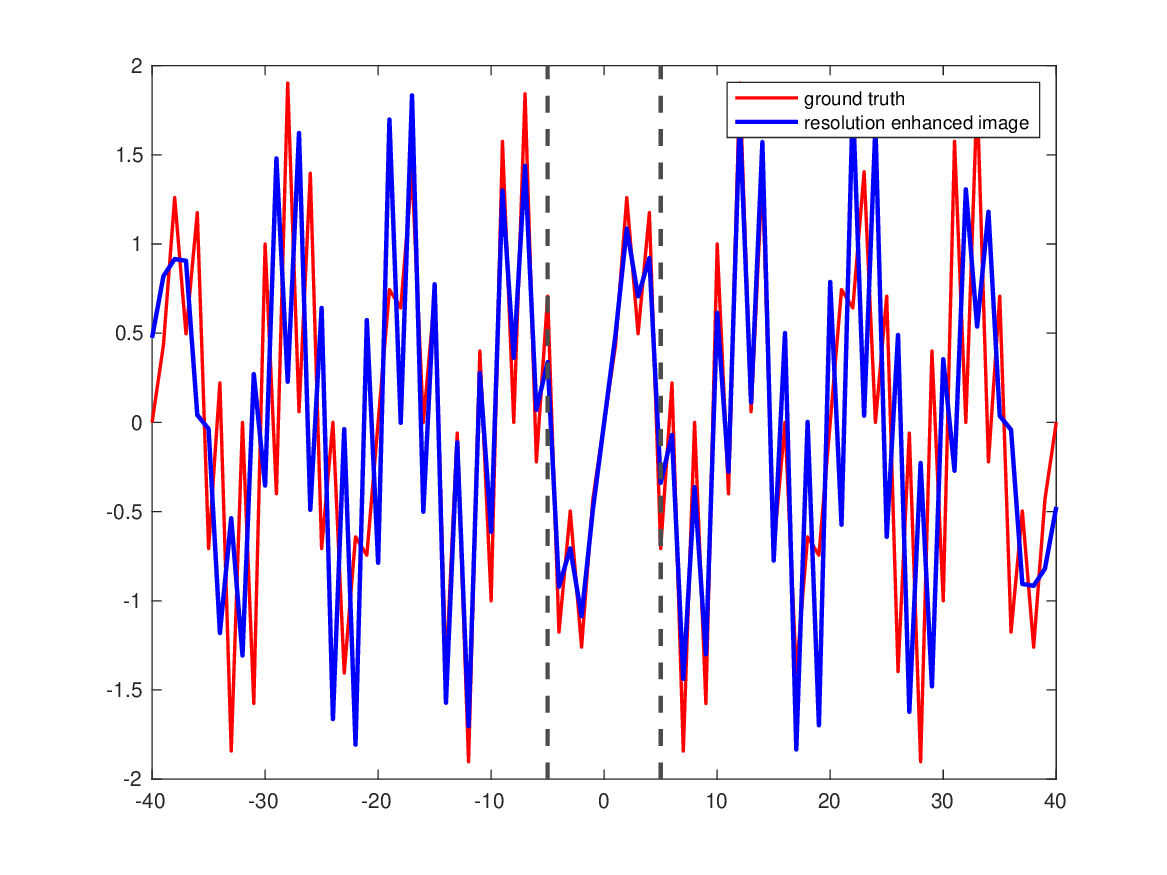}
    }
    \caption{Two-point source approximation of the signal, followed by resolution enhancement up to $\mathrm{SRF} = 6$. The experiment is conducted under $\mathrm{SNR} = 12.76$ (see the definition of SNR in (\ref{eq: snr})).}
    \label{fig: two point approx}
\end{figure}

\begin{rmk}
In the stability estimate \eqref{eq-ph8}, we use a $\sigma$-admissible parameter $\tilde{\theta}$ from the compatible modeling pair as the reference “ground truth”, in contrast to \eqref{stability est repre}, which uses the actual ground-truth parameter $\theta^*$. 
\end{rmk}
Within the framework of sparse compatible modeling pairs, when the ground truth $\theta^*$ is well separated, one can still select a sparse and well-separated estimate, and super-resolution is stable for such signals. When $\theta^*$ is not well separated, the situation is subtler. For moderate or large noise, the sparse $\sigma$-admissible set may contain well-separated parameters. Then the optimization problem (\ref{opt_problem}) can be stably solved and this leads to stable super-resolution. 
However, as the noise level $\sigma \to 0$, the $(\Theta,\sigma)$-admissible set shrinks and concentrates around $\theta^*$, so no admissible parameter can be well separated. As a consequence the optimization problem (\ref{opt_problem}) remains ill-conditioned. This leads to instability of super-resolution,  consistent with the intuition that achieving super-resolution in the sub-Rayleigh regime is inherently challenging, although theorectical possible.

\subsection{The Optimization Problem} \label{subsec: optimization}

In this section, we characterize the landscape of the optimization problem (\ref{opt_problem}). We restrict attention to the case where $(\Theta,\mathcal{P})$ is a sparse $\sigma$-compatible modeling pair, and focus on the local property of $\varphi(\theta)$. A treatment of general $\sigma$-compatible modeling pairs and a global landscape analysis is beyond the scope of this paper. We also note that there exist numerous numerical optimization methods for solving the nonlinear least-squares problem (\ref{opt_problem}); see, for example, \cite{mohammad2019brief} for a brief survey.

Let the noisy low-resolution measurement be given in (\ref{noisy measurement}). We first write the objective function as 
\begin{align}
    \varphi(\theta) = \frac{1}{2}\cdot \| \PL(\theta)-y \|^2 = \frac{1}{2} \cdot \sum_{k=-K_L}^{K_L} \left|\mathcal{P}_{L,k} (\theta)-y_k\right|^2,
\end{align}
where $y_k = \mathcal{P}_{L,k}(\theta^*)+W_k$. The following theorem shows that for any solution to (\ref{opt_problem}), under a certain noise level, the objective function is locally $\nu_l$-convex and $\nu_u$-smooth in the neighborhood of the solution.
\begin{thm}{\label{thm: abs_convergence}}
    Assume that $U\subset\Real^m$ is a compact set.  Consider $\PL \in C^2(\Real^m,\HL)$ satisfying that 
    \begin{itemize}
        \item $\PL|_{\Theta}$ is injective,
        \item $D\PL(\theta)$ is injective for all $\theta\in U$.
    \end{itemize}
    Let $\hat{\theta}\in U$ be a solution to optimization problem (\ref{opt_problem}), then 
    \begin{itemize}
        \item $\hat{\theta}$ is an admissible solution.
        \item there exists a neighborhood of $\hat{\theta}$, say $U_{\hat{\theta}}\subset U \subset\Theta$, and $\nu_u,\nu_l >0$ such that 
    \begin{align}
        \nu_l \mathcal{I} \preccurlyeq \nabla^2\varphi(\theta) \preccurlyeq \nu_u \mathcal{I}, \quad \forall \theta\in U_{\hat \theta},
    \end{align}
    provided $\sigma \le \frac{\sigma^2_{\min}\bra{D\PL(\hat{\theta})}}{\|\xi\|}$, where $\xi = (\|\nabla^2\mathcal{P}_{L,-K_L}\|_{op},\cdots,\|\nabla^2\mathcal{P}_{L,K_L}\|_{op})$. 
    \end{itemize}
    
\end{thm}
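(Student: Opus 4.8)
The plan is to prove the two bullet points in turn: the first is a one-line optimality comparison, and the second rests on a Gauss--Newton splitting of the Hessian of $\varphi$ together with a continuity argument.

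For admissibility, I would use that $\hat\theta$ minimizes $\varphi$ over $\Theta$ while the ground truth $\theta^*\in\Theta$ is feasible, so $\varphi(\hat\theta)\le\varphi(\theta^*)$. Since $y=\PL(\theta^*)+W$ with $\|W\|<\sigma$, we have $\varphi(\theta^*)=\tfrac12\|W\|^2<\tfrac12\sigma^2$, whence $\tfrac12\|\PL(\hat\theta)-y\|^2<\tfrac12\sigma^2$ and therefore $\|\PL(\hat\theta)-y\|<\sigma$, which is exactly $(\Theta,\sigma)$-admissibility.

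For the $\nu_l$-convexity and $\nu_u$-smoothness, write $r_k(\theta)=\mathcal{P}_{L,k}(\theta)-y_k$ and differentiate $\varphi=\tfrac12\sum_k|r_k|^2$ twice in the real variable $\theta$. I would record the decomposition
$$\nabla^2\varphi(\theta)=G(\theta)+H(\theta),$$
where the Gauss--Newton term has $(i,j)$ entry $\mathrm{Re}\sum_k \overline{\partial_i\mathcal{P}_{L,k}}\,\partial_j\mathcal{P}_{L,k}$ and the residual term is $H(\theta)=\mathrm{Re}\sum_k \overline{r_k}\,\nabla^2\mathcal{P}_{L,k}$. The splitting is useful because $G$ is precisely the real Gram matrix of $D\PL$: for every $u\in\Real^m$ one checks $u^\top G(\theta)u=\|D\PL(\theta)u\|^2$, so injectivity of $D\PL(\theta)$ gives $G(\theta)\succcurlyeq\sigma_{\min}^2(D\PL(\theta))\,\mathcal{I}$, while $G(\theta)\preccurlyeq\|D\PL(\theta)\|_{op}^2\,\mathcal{I}$. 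The residual term is governed by the size of the misfit: bounding each summand by $|r_k|\,\|\nabla^2\mathcal{P}_{L,k}\|_{op}$ and applying Cauchy--Schwarz gives
$$\|H(\theta)\|_{op}\le\|\PL(\theta)-y\|\cdot\|\xi(\theta)\|,$$
with $\xi(\theta)$ the vector of Hessian norms in the statement. Evaluating at $\hat\theta$ is where the hypothesis on $\sigma$ enters: admissibility yields $\|\PL(\hat\theta)-y\|<\sigma$, so $\|H(\hat\theta)\|_{op}<\sigma\,\|\xi(\hat\theta)\|\le\sigma_{\min}^2(D\PL(\hat\theta))$ under the assumed bound, and this strict inequality forces $\nabla^2\varphi(\hat\theta)\succcurlyeq\bigl(\sigma_{\min}^2(D\PL(\hat\theta))-\|H(\hat\theta)\|_{op}\bigr)\mathcal{I}\succ0$. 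Because $\PL\in C^2$, the maps $\theta\mapsto G(\theta)$ and $\theta\mapsto H(\theta)$ are continuous, so strict positive-definiteness persists on a small closed ball $U_{\hat\theta}\subset U$; taking $\nu_l$ a uniform lower bound for the least eigenvalue there and $\nu_u$ an upper bound for $\|G\|_{op}+\|H\|_{op}$ (finite by compactness) delivers $\nu_l\mathcal{I}\preccurlyeq\nabla^2\varphi(\theta)\preccurlyeq\nu_u\mathcal{I}$ on $U_{\hat\theta}$.

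The differentiation and the norm estimates are routine; the step demanding the most care is keeping the complex-valued samples $\mathcal{P}_{L,k}\in\comp$ straight against the real parameter $\theta\in\Real^m$, in particular justifying that $\sigma_{\min}(D\PL)$ is the smallest singular value of $D\PL$ viewed as a real-linear map $\Real^m\to\comp^{2K_L+1}$, so that the Gram identity $u^\top G u=\|D\PL\,u\|^2$ is valid, and ensuring the inequality $\|H(\hat\theta)\|_{op}<\sigma_{\min}^2(D\PL(\hat\theta))$ is \emph{strict} so that $\nu_l>0$ rather than merely $\nu_l\ge0$. That strictness is supplied by the strict noise bound $\|W\|<\sigma$, inherited through the admissibility established in the first step.
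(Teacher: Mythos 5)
Your proposal is correct and follows essentially the same route as the paper's proof: the identical optimality comparison $\varphi(\hat\theta)\le\varphi(\theta^*)=\tfrac12\|W\|^2<\tfrac12\sigma^2$ for admissibility, and the same Gauss--Newton splitting of $\nabla^2\varphi$ into the Gram term plus a residual term bounded by $\|\PL(\hat\theta)-y\|\cdot\|\xi\|$ via Cauchy--Schwarz, with your direct eigenvalue estimate $\nabla^2\varphi(\hat\theta)\succcurlyeq\bigl(\sigma_{\min}^2(D\PL(\hat\theta))-\|H(\hat\theta)\|_{op}\bigr)\mathcal{I}$ being exactly the paper's invocation of Weyl's theorem, followed by the same continuity-and-compactness argument to obtain $U_{\hat\theta}$, $\nu_l$, and $\nu_u$. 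Your explicit attention to the real-parameter/complex-sample bookkeeping and to the strictness of the inequality is a careful rendering of steps the paper treats implicitly, not a different method.
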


\begin{rmk}
    The $(\Theta,\sigma)$-admissibility of $\hat{\theta}$ implies the stability estimate (\ref{stability est repre}) holds provided that $D\PH$ is bounded. If we consider any compatible modeling pair $(\widetilde{\Theta},\widetilde{\mathcal{P}})$, the analog of admissibility and stability estimate still holds.
\end{rmk}
Following the standard convergence analysis, the theoretical convergence rate of different optimization algorithms can be derived in this case. For instance, for suitable initialization and step size, gradient descent method has convergence rate $\Ord\bra{\bra{1-\frac{\nu_l}{\nu_u}}^t}$, and Nesterov accelerated gradient descent has convergence rate $\Ord\bra{\bra{1-\sqrt{\frac{\nu_l}{\nu_u}}}^t}$.

\begin{rmk}
 A limitation of our model-based super-resolution framework is the need to solve a nonconvex optimization problem, in contrast to convex approaches \cite{10.2307/2346178,candes2014towards,candes2013super,tang2014near,chi2020harnessing,de2012exact,YANG2018509,doi:10.1137/17M1147822}. However, we argue that, due to the inherent nonconvexity of super-resolution, convex relaxations may fail in challenging regimes—such as recovering point sources separated below the Rayleigh limit. With advances in iterative solvers and initialization strategies, model-based nonconvex methods may be better suited for practical applications. 
In addition, non-iterative subspace methods such as MUSIC, ESPRIT, and matrix pencil techniques (cf. \cite{schmidt1986multiple,roy1989esprit,hua1990matrix,stoica2005spectral,candes2014towards}) have demonstrated excellent super-resolution capabilities in low-noise regimes. However, these methods depend crucially on specific signal classes, particularly the point-source (line spectral) model. Moreover, due to heavy computational burdens and unfavorable sample-complexity in high dimensions, their practical use is largely restricted to one- and, to a limited extent, two-dimensional settings.
\end{rmk}

\section{Point Source Model}{\label{sec: point source}}
In this section, we consider the super-resolution problem for the point source model.\\

For the signals generated by $n$ different sources in the interval $[-\frac{1}{2},\frac{1}{2}]_*$ with amplitudes taking values in a closed interval $I\subset [-A_I,A_I]$. Let $\Theta = I^n \times [-\frac{1}{2},\frac{1}{2}]_*^n$ be the parameter space and let $\theta = (\theta_{1,1},\cdots,\theta_{n,1},\theta_{1,2},\cdots,\theta_{n,2})\in \Theta$. We define the model map as 
\begin{align}\label{eq: pts P}
    \mathcal{P}(\theta)= \psi(x) = \sum_{j=1}^n \theta_{j,1}\delta_{\theta_{j,2}}.
\end{align}
Here, $\theta_{j,2}$ represents the position of the point sources, $\theta_{j,1}$ the corresponding amplitude. The intrinsic dimension of the signal space is $2n$. Then, the signal space has an explicit form:
\begin{align}\label{model: point}
    \mathcal{M}  = \left\{ \sum_{j=1}^n \theta_{j,1}\delta_{\theta_{j,2}}: \theta_{j,1}\in I, \theta_{j,2}\in [-\frac{1}{2},\frac{1}{2}]_*  \right\}.
\end{align}
For the grid defined by $\omega_k = k$, we have 
\begin{align}
    g_k = \sum_{j=1}^n \theta_{j,1}e^{-2\pi i \theta_{j,2}k}.
\end{align}
The low- and high-resolution sampling operators, $G_L$ and $G_H$, can be defined by (\ref{def: GL}) and (\ref{def: GH}) respectively. 
Consequently, the noisy low-resolution measurement can be expressed as 
\begin{align}\label{eq: point signal profile}
    y_k = g_k+W_k = \sum_{j=1}^n \theta_{j,1}e^{-2\pi i \theta_{j,2}k}+W_k, \quad k=-K_L,\cdots,K_L,
\end{align}
where $W_k$ is the noise term with $|W_k|<\sigma$.  We assume that $K_L\ge n$. The Rayleigh length of this system is defined as $RL = \frac{1}{2K_L}$. We define the minimum separation distance $d_{\min}$, and the minimum amplitude $m_{\min}$ as 
\begin{align}
    d_{\min} = \min_{j\ne j'}|\theta_{j,2}-\theta_{j',2}|, \quad m_{\min} = \min_j |\theta_{j,1}|.
\end{align}


We notice that in \cite{9410626,LIU2022402}, the authors show that the computational resolution limit for the point source model is given by 
\begin{align}
    \mathcal{D}_{supp} \sim \Ord\bra{\frac{1}{2K_L}\bra{\frac{\sigma}{m_{\min}}}^\frac{1}{2n-1}}.
\end{align}
In recent work \cite{liu2024mathematical}, the authors improve the characterization of $\mathcal{D}_{supp}$ to 
\begin{align}
   \frac{1}{2K_Le\pi} \bra{\frac{\sigma}{m_{\min}}}^\frac{1}{2n-1}<\mathcal{D}_{supp}\le \frac{2.36e}{2K_L} \bra{\frac{\sigma}{m_{\min}}}^\frac{1}{2n-1}
\end{align}

Thus, we note that when the minimum separation between the $n$ point sources exceeds the computational resolution limit, the ideal modeling pair introduced above coincides with a sparse $\sigma$-compatible modeling pair. A complete characterization of the corresponding stability is provided below.

\begin{thm}{\label{thm: stability point source}}
    Let $\Theta = I^n \times [-\frac{n-1}{4K_L},\frac{n-1}{4K_L}]^n$, and $\mathcal{P}$ be defined in (\ref{eq: pts P}), considering the signal having the form in (\ref{eq: point signal profile}). Assume that the following separation condition is satisfied 
    \begin{align}
        d_{\min} = \min_{j\ne j'}|\theta_{j,2}-\theta_{j',2}|> \frac{2.36e}{2K_L} \bra{\frac{\sigma}{m_{\min}}}^\frac{1}{2n-1}.
    \end{align}
    For any $(\Theta,\sigma)$-admissible parameter $\hat{\theta}$, we have 
    \begin{align} \label{eq-c1}
        \|\hat{\theta}-\theta^*\|\le C_1\cdot\sigma,
    \end{align}
    where 
    $C_1^2 = n\bra{\frac{C_{1,1}(n)}{2K_L\pi \cdot m_{\min}}\bra{\frac{1}{2K_Ld_{\min}}}^{2n-2}}^2+n \bra{C_{1,2}(n)\bra{\frac{1}{2K_Ld_{\min}}}^{2n-1}}^2$ with constants $C_{1,1}(n)$ and $C_{1,2}(n)$ that only depend on the source number $n$. \\
    
    Further, for such $\theta$, we have
    \begin{align}
        \| D\PH\|_{op}^2 \le (2K_H+1)n+\frac{4n\pi^2A_I^2}{3}\cdot\bra{K_H(K_H+1)(2K_H+1)},
    \end{align}
    and
    \begin{align}
        \|D\PH\|_{op} \gtrsim K_H^{3/2}.
    \end{align}
    Let $C_1' = C_1\cdot \|D\PH\|_{op}$, we have
    \begin{align}\label{Lip: point source}
        \|\PH(\theta)-\PH(\theta')\|\le C_1' \cdot \sigma.
    \end{align}
    As a consequence, if $\hat{\theta}$ is a $(\Theta,\sigma)$-admissible parameter, then
    \begin{align}
        \|\PH(\hat{\theta})-\PH(\theta^*)\|<2C_1'\cdot \sigma.
    \end{align}
\end{thm}

\begin{rmk}
   By the theorem above, the Lipschitz constant for the high-frequency data extrapolation of the point source model is of the order $\Ord(K_H^{3/2})$ as $K_H$ grows.
\end{rmk}
\begin{rmk}
    The explicit express of the Lipschitz constant $C_1$ in (\ref{eq-c1}) indicates a phase transition in the stability of parameter recovery from low‑resolution measurements. Specially, for fixed source number $n$, $C_1$ grows polynomially with $2K_Ld_{\min}$ if $2K_Ld_{\min}<1$, and decrease as $2K_Ld_{\min}$ increases beyond one. Here $2K_Ld_{\min}$ is the ratio of the the minimum separation distance and Rayleigh length. 
\end{rmk}

The stability result above characterizes how the stability depends on the minimum separation distance, cutoff frequency. Here, we further point out that the result derived above implies that $\theta$ in the above theorem can be determined exactly from the low-resolution measurement for the noiseless measurement. This implies that the signal $\mathcal{P}(\theta)$ can be exactly recovered. We also notice that the result can be extended to the case when the sources have complex amplitudes. These observations indicate that within the proposed framework, the exact signal recovery does not require the minimum separation distance condition nor the conditions on source signs for the noiseless measurement (this is to be contrasted with the BLASSO strategy, for which a counter-example exists for sources having arbitrary sign and separation distance below $1RL$ \cite{doi:10.1137/17M1147822}). Further, the stability result offers a perspective on how the $\ell_2$ error of a high-resolution signal depends on noise. In the super-resolution literature, various other types of stability results have also been investigated. For instance, the authors in \cite{candes2014towards, candes2013super} derived $\ell_1$-based stability estimates for total-variation-norm-minimization solutions to the super-resolution problem in the point source model, subject to the minimum separation condition.

\section{Going Beyond Point Source Model}\label{sec: beyond pts}
In this section, we discuss the application of the general theory developed in previous chapters on more general models. The modeling pair we pick in this section is assumed to the sparse $\sigma$-compatible modeling pair. 

\subsection{Signals with Finite Rate of Innovation}{\label{sec: finite rate of innovation}}

In this section, we consider the super-resolution problem for signals with a finite rate of innovation (FRI), see e.g. \cite{1003065,doi:10.1137/15M1042280}. We use signals generated by derivatives of Diracs in the physics domain as a typical example for demonstration.\\

We consider the sources in the interval $[0,1)$ with amplitudes taking values in a closed interval $I\subset [-A_I,A_I]$. For the sources corresponding to the $r$-th derivative of delta, $r=0,\cdots,R$, we denote the total number as $n_r$, the source positions as $\{\theta_{r,j,2}\}_{j=1}^{n_r}$, and the amplitudes as $\{\theta_{r,j,1}\}_{j=1}^{n_r}$. We write $N=\sum_{r=0}^R n_r$ for the total number of sources.\\

Let $\Theta = I^N\times[0,1]_*^N$ be the parameter space, we define the model map as 
\begin{align}
    \mathcal{P}(\theta) = \psi(x)= \sum_{r=0}^R \sum_{j=1}^{n_r} \theta_{r,j,1}  \delta^{(r)}_{\theta_{r,j,2}},
\end{align}
where $\theta=(\theta_{1,0,1},\cdots,\theta_{n_R,R,1},\theta_{1,0,2},\cdots,\theta_{n_R,R,2})$, and $\delta^{(r)}$ denotes the $r$-th derivative of $\delta$. Thus, the signal space can be written as $\Omega = \mathcal{P}(\Theta)$. For the grid $\omega_{k} = k$, we have 
\begin{align}
    g_k = \sum_{r=0}^R \sum_{j=1}^{n_r} \theta_{r,j,1}\cdot (-2\pi i k)^r  e^{-2\pi i \theta_{r,j,2}k}.
\end{align}
The low- and high-resolution sampling operators, $G_L$ and $G_H$, are defined by (\ref{def: GL}) and (\ref{def: GH}) respectively. 
Consequently, the noisy low-resolution measurement can be expressed as 
\begin{align}
    y_k = g_k+W_k = \sum_{r=0}^R \sum_{j=1}^{n_r} \theta_{r,j,1}\cdot (-2\pi i k)^r  e^{-2\pi i \theta_{r,j,2}k}+W_k, \quad k=-K_L,\cdots,K_L,
\end{align}
where $W_k$ is the noise term with $|W_k|<\sigma$.  We assume that $K_L\ge N$. The Rayleigh length is defined as $RL=\frac{1}{2K_L}$.\\

Applying Theorem \ref{thm: abstract stability} and Theorem \ref{cor: repre} to this model, we have the following stability estimate.
\begin{thm}{\label{thm: stability diff Dirac}}
    For any given $\theta^* = (\theta_{0,1,1}^*,\cdots,\theta_{R,n_R,1}^*,\theta_{0,1,2}^*,\cdots,\theta_{R,n_R,2}^*)\in \Theta$, let $\Delta_r = \frac{1}{2}\cdot \min_{p\ne q} d_{\mathbb T}(\theta^*_{r,p,2},\theta^*_{r,q,2})$. We assume that $\theta_{r,j,1}^*\ne 0$, for $j=1,\cdots n_r$ and $r=0,\cdots, R$, and $\Delta_r>0$. Let $U = \prod_{r=0}^R \prod_{j=1}^{n_r} \bra{B(\theta_{r,j,1}^*,\frac{|\theta_{r,j,1}^*|}{2})\cap I}\times \prod_{r=0}^R \prod_{j=1}^{n_r} B(\theta_{r,j,2}^*,\Delta_r)$. Then, there exists $C_{U}>0$ such that for any $\theta,\theta'\in U$, 
    \begin{align} \label{eq-cu2}
        \|\theta-\theta'\| \le C_{U} \cdot \|\PL(\theta)-\PL(\theta')\|.
    \end{align}
    In addition, for $\theta\in U$, 
    \begin{align}
        \|D\PH(\theta) \|_{op}^2 \le \sum_{k=-K_L}^{K_L} \sum_{r=0}^R n_r(2\pi k)^{2r}\bra{1+4\pi^2k^2A_I^2},
    \end{align}
    and
    \begin{align}
        \| D\PH(\theta)\|_{op} \gtrsim K_H^{R+3/2}.
    \end{align}
   Furthermore, 
    \begin{align}
        \|\PH(\theta)-\PH(\theta')\|\le C_U\cdot \|D\PH\|_{op} \cdot \|\PL(\theta)-\PL(\theta')\|.
    \end{align}
    As a consequence, if $\hat{\theta}\in U$ is a $(\Theta, \sigma)$-admissible parameter, then 
    \begin{align}
        \| \PH(\hat{\theta})-\PH(\theta^*) \| < 2C_U\cdot \|D\PH\|_{op}\cdot\sigma.
    \end{align}
\end{thm}
\begin{rmk}
The assumption $\theta_{r,j,1}^* \neq 0$ is made for simplicity. It can be relaxed by restricting attention to a reduced signal subspace with fewer active sources.
\end{rmk}

\begin{rmk}
By the above theorem, the Lipschitz constant for the high-frequency data extrapolation of the FRI signals is of the order $\Ord(K_H^{R+3/2})$ as $K_H$ grows. Thus, stability is heavily influenced by the sources associated with the highest order of derivatives of Diracs.
\end{rmk}

The exact characterization of the constant $C_U$ in (\ref{eq-cu2}) is too intricate to present here. Instead, we provide a simple example below to demonstrate its dependence on the separation distance.

\begin{prop} \label{prop4-4}
Consider two sources in the physical domain, $\delta_z^{(r)}+\delta_{z'}^{(r)}$, for $z,z'\in [0,1]$ with $0<|z-z'|<\frac{1}{4K_L}$. We denote $\Delta : =|z-z'| $. Then 
\begin{align} \label{eq-kl1}
    \|\theta -\theta' \| \le \frac{C}{ K_L^{r+5/2}\cdot\Delta}\|\PL(\theta)-\PL(\theta') \|,
\end{align}
for some universal constant $C>0$.

\end{prop}

As a generalization of the point source model, there are few theoretical results for signals with a finite rate of innovation. We notice that the authors consider the on-the-grid setting stability estimate for $R=1$ in \cite{batenkov2023super}. For the general model, the problem is still widely open. 

\subsection{Towards General Signals}{\label{sec: continuous spectrum}}
Previously, we consider the super-resolution problem for signals having discrete forms in the physical domain. In this section, we consider continuous signals in the physical domain.\\

To demonstrate the idea, we consider signals that are probability density functions of Gaussian mixtures with $n$ components. More precisely, let $\Theta = I_1^n\times I_2^n\times[-\frac{1}{2},\frac{1}{2}]_*^n$ be the parameter space. We define the model map as 
\begin{align}
    \mathcal{P}(\theta) = \psi(x) =\sum_{j=1}^n \theta_{j,1}e^{-\frac{(x-\theta_{j,2})^2}{2\alpha^2}},
\end{align}
where $\theta=(\theta_{1,1},\cdots,\theta_{n,1},\theta_{1,2},\cdots,\theta_{n,2})$. Here $\theta_{j,1}$, $\theta_{j,2}$ are the weight and mean of the $j$-th component, and they take values in a closed interval $I_1$ and $[-\frac{1}{2},\frac{1}{2}]_*$, respectively. Additionally, $\alpha>0$ represents the variance and is assumed to be known. The signal space can be written as $\mathcal{M} = \mathcal{P}(\Theta)$. For the grid $\omega_{k} = k$, we have 
\begin{align}
    g_k = \sqrt{2\pi\alpha^2}\cdot\sum_{j=1}^n \theta_{j,1}\cdot e^{-2\pi i \theta_{j,2}\omega_k} \cdot e^{-2\pi^2\alpha^2\omega_k^2}.
\end{align}
The low- and high-resolution sampling operators, $G_L$ and $G_H$, are defined by (\ref{def: GL}) and (\ref{def: GH}) respectively. 
Consequently, the noisy low-resolution measurement can be expressed as 
\begin{align}
y_k = g_k + W_k = \sqrt{2\pi\alpha^2}\cdot \sum_{j=1}^n \theta_{j,1}\,
       e^{-2\pi i \theta_{j,2}\omega_k}\,
       e^{-2\pi^2\alpha^2\omega_k^2}
       + W_k, 
       \quad k=-K_L,\ldots,K_L .
\end{align}
where $W_k$ is the noise term with $|W_k|<\sigma$. We assume that $2K_L+1\ge 2n$.\\

Applying Theorem \ref{thm: abstract stability} and Theorem \ref{cor: repre} to this model, we have the following stability estimate.
\begin{thm}{\label{thm: stability gauss}}
    For any given $\theta^* = (\theta_{1,1}^*,\cdots,\theta_{n,1}^*,\theta_{1,2}^*,\cdots,\theta_{n,2}^*)\in\Theta$, let $\Delta = \frac{1}{2}\cdot \min_{p\ne q} d_{\mathbb T}(\theta_{p,3}^*,\theta_{q,3}^*)$. We assume that $\theta^*_{j,1}\ne 0$ for $j=1,\cdots,n$ and $\Delta >0$. Let $U = \prod_{j=1}^n \bra{B(\theta_{j,1}^*,\frac{|\theta_{j,1}^*|}{2})\cap I_1}\times\prod_{j=1}^n B(\theta_{j,3}^*,\Delta)$. 
    Then, there exists $C_{U}>0$ such that for any $\theta,\theta'\in \Theta$, 
    \begin{align} \label{eq-cu3}
        \|\theta-\theta'\| \le C_{U} \cdot \|\PL(\theta)-\PL(\theta')\|.
    \end{align}
    Furthermore,
    \begin{align}
        \|D\PH \|_{op} \le C',
    \end{align}
    for some constant $C'$ independent of $K_H$, and we have
    \begin{align}
        \|\PH(\theta)-\PH(\theta')\|\le C_U\cdot C' \cdot \|\PL(\theta)-\PL(\theta')\|.
    \end{align}
    As a consequence, if $\hat{\theta}\in U$ is a $(\Theta, \sigma)$-admissible parameter, then 
    \begin{align}
        \| \PH(\hat{\theta})-\PH(\theta^*) \| < 2C_U\cdot C'\cdot\sigma.
    \end{align}
\end{thm}

\begin{rmk}
    Due to the smoothness of the signal profile, its Fourier transform decays rapidly in the frequency space. We can observe that the extrapolation is stable in the frequency domain.
\end{rmk}

A precise characterization of the constant $C_U$ in (\ref{eq-cu3}) for the general case is beyond the scope of this paper. Below, we present a simple example illustrating its dependence on the separation distance.

\begin{prop}\label{prop4-5}
Consider two sources in the physical domain, $e^{-\frac{(x-z)^2}{2\alpha^2}}+e^{-\frac{(x-z')^2}{2\alpha^2}}$, for $z,z'\in [0,1]$ with $0<|z-z'|<\frac{1}{4K_L}$. We denote $\Delta : =|z-z'| $. Then 

\begin{align} \label{eq-kl2}
    \|\theta -\theta' \| \le \frac{1}{ C(K_L)\cdot\Delta}\|\PL(\theta)-\PL(\theta') \|,
\end{align}
for some constant $C(K_L)>0$ with $C(K_L)\rightarrow C$ as $K_L\rightarrow \infty$.
    
\end{prop}

\begin{rmk}
    The dependence on $K_L$ in (\ref{eq-kl2}) differs from that in (\ref{eq-kl1}). This observation aligns with the intuition that an FRI signal carries substantial energy at high frequencies, while a Gaussian mixture does not.
\end{rmk}


We note that there are few studies on the super-resolution problems for signals with continuous profiles in the physics space. From the results in Section \ref{sec: sta_spase}, we observe that appropriate modeling leads to a stable solution to the super-resolution problem. However, for general signals, choosing an appropriate model is challenging. Whether using a physics-based or a data-driven model remains a topic worthy of exploration.

\section{Numerical Experiments}{\label{sec: Numerical experiments}}
In this section, we conduct numerical experiments to test the numerical behavior of the proposed method on different signal models. Throughout this section, we define the signal-to-noise ratio for the low-resolution signal as 
\begin{align}\label{eq: snr}
    \operatorname{SNR} := 10\cdot \log_{10} \frac{\|\operatorname{signal}\|}{\|\operatorname{noise}\|}.
\end{align}
The experiments is based on the sparse $\sigma$-compatible models, all the algorithms to solve the nonlinear least-square problem are based on the Nesterov accelerated gradient descent method.

\subsection{Point Source Model}\label{subsec: num point}
In this section, we conduct two groups of experiments to test the numerical behavior of the proposed numerical scheme for the point source model. \\

First, we test the stability. We fix $K_L=10$, then the corresponding Rayleigh length is given by $RL=\frac{1}{20}$. We set 5 groups of point sources aligned in $[0,1)$ in the following way. The point sources are separated by $1 RL$ in each group, and different groups are separated by $3 RL$. We set the amplitude of the sources to follow the uniform distribution $\mathcal{U}[1,2]$, and the SNR to be around 20. We conduct $20$ random experiments where the randomness is from the amplitudes and noise. In each experiment, we pick the initial guess of the source positions by perturbing  $0.4 RL$ to the ground truth of source positions. Figure \ref{fig: point_error} shows the numerical result of the above experiments with average $\operatorname{SNR} = 19.18$. \\

We use the next experiment above to visualize the resolution-enhanced signal in the physics domain. We choose one realization from the random experiments. For given super-resolution factors $SRF=10,20$, we first extrapolate the high-frequency data according to the reconstructed source positions and amplitudes and then calculate the signal profile in the physics domain by inverse FFT (iFFT), the result is shown in Figure \ref{fig: point_srf}.\\
\begin{figure}[htb]
    \centering
    \subfloat[Reconstruction Error]{
        \includegraphics[width=0.45\textwidth]{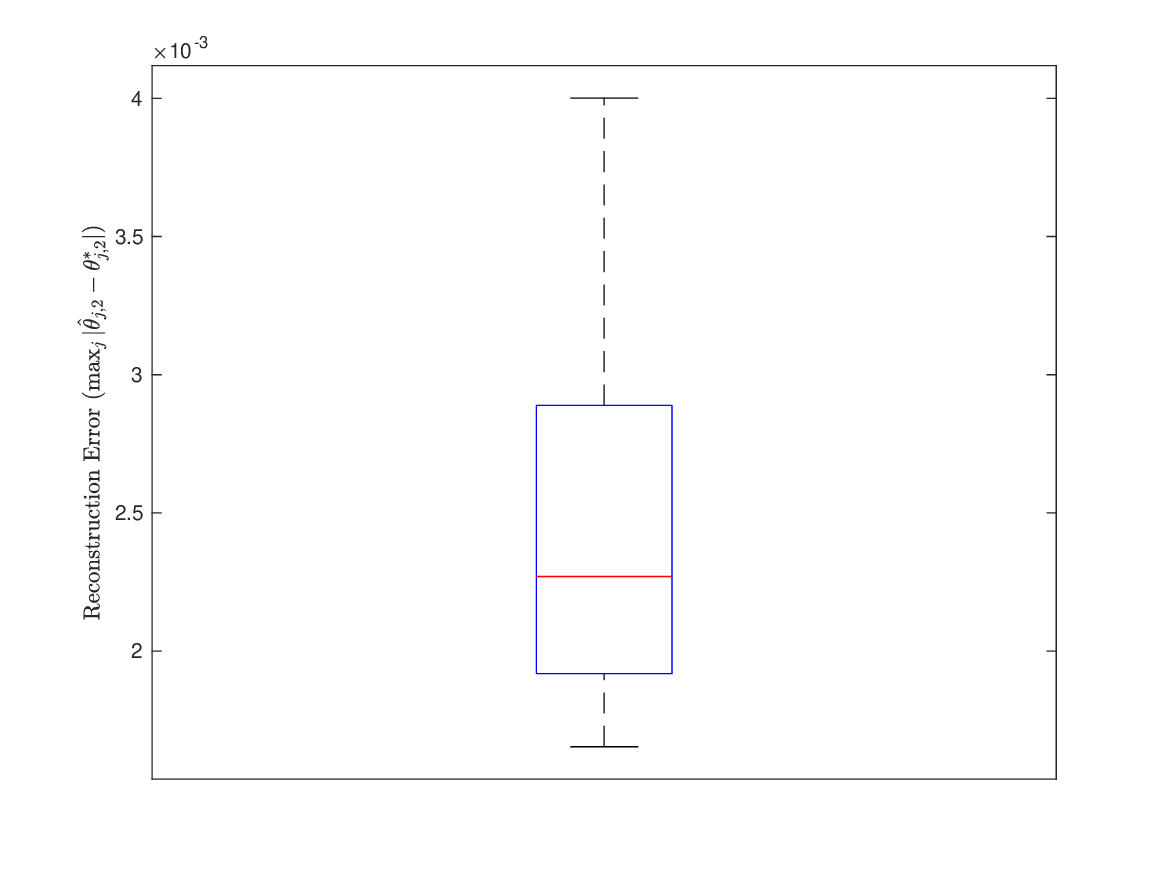}
        \label{fig: point_error}}
    \hfill
    \subfloat[Original Signal]{
        \includegraphics[width=0.45\textwidth]{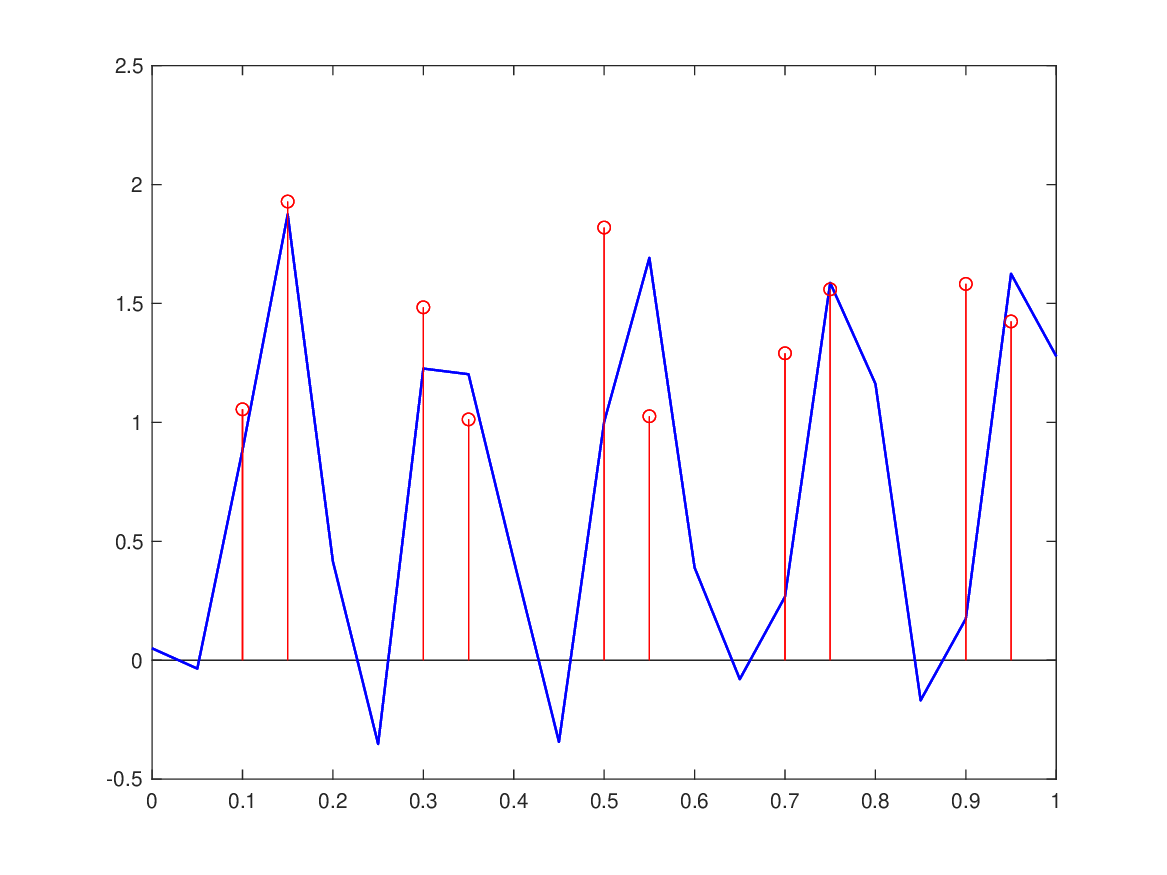}
        \label{ori}}
    \\
    \subfloat[SRF=10]{
        \includegraphics[width=0.45\textwidth]{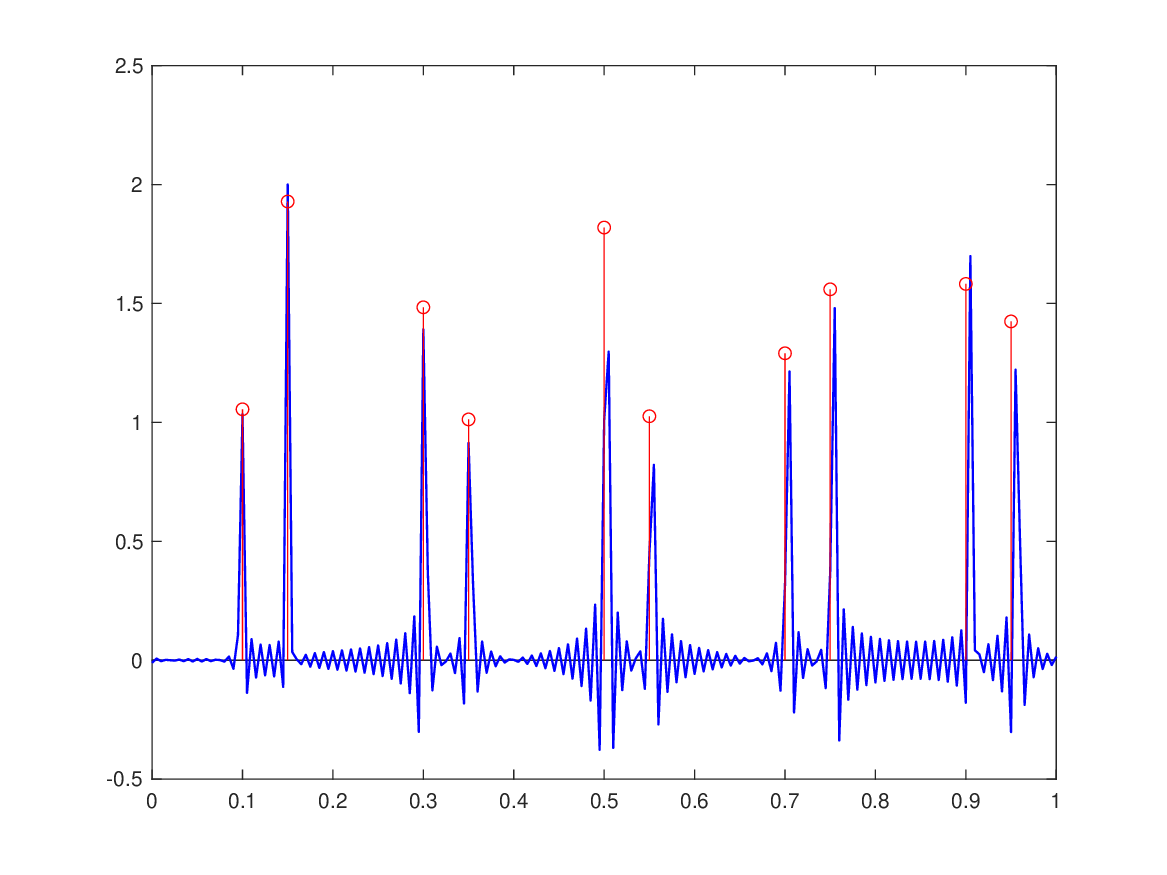}
        \label{srf10}}
    \hfill
    \subfloat[SRF=20]{
        \includegraphics[width=0.45\textwidth]{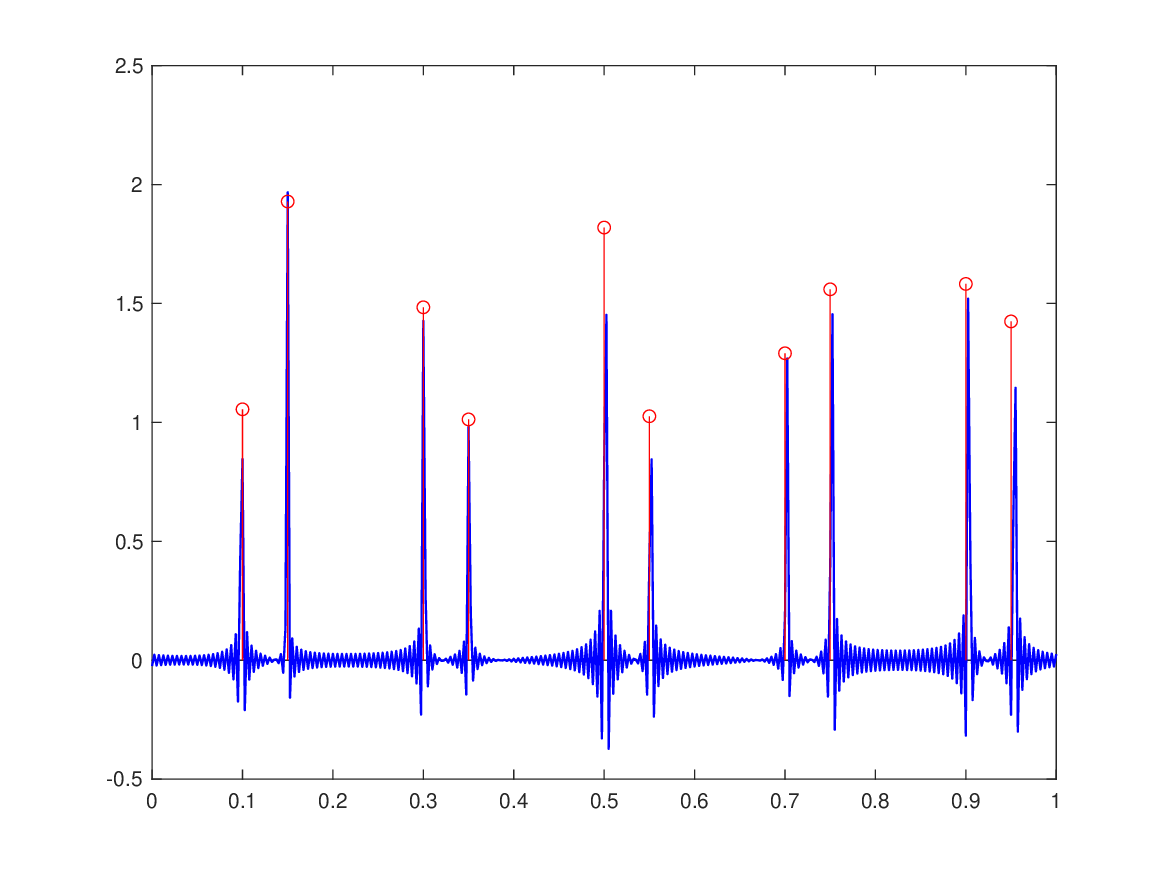}
        \label{srf20}}
    \caption{(a) Boxplot of point source position reconstruction error. (b-d) Original and resolution-enhanced signals in the physics domain. The red line represents the ground truth of the point source. The blue line shows the signal profile calculated by iFFT using the original/extrapolated Fourier data. The SNR of the experiment is $20.08$.}
    \label{fig: point_srf}
\end{figure}

Second, we demonstrate that the proposed method does not need the separation condition for the noiseless source reconstruction. We fix $K_L=5$, then the corresponding Rayleigh length is given by $RL=\frac{1}{10}$. We set two point sources in $[0,1)$ with separation distance $\frac{1}{100} RL$, and set the amplitude of the sources to follow the uniform distribution $\mathcal{U}[1,2]$. We pick the initial guess of the source position as the ground truth with a perturbation of half the separation distance. We stop the algorithm when the residue is at the level $\Ord\bra{10^{-7}}$. Figure \ref{fig: point_free} shows that the proposed method can distinguish the two point sources and give a good estimation.\\
\begin{figure}[htb]
    \centering
    \includegraphics[width=0.5\textwidth]{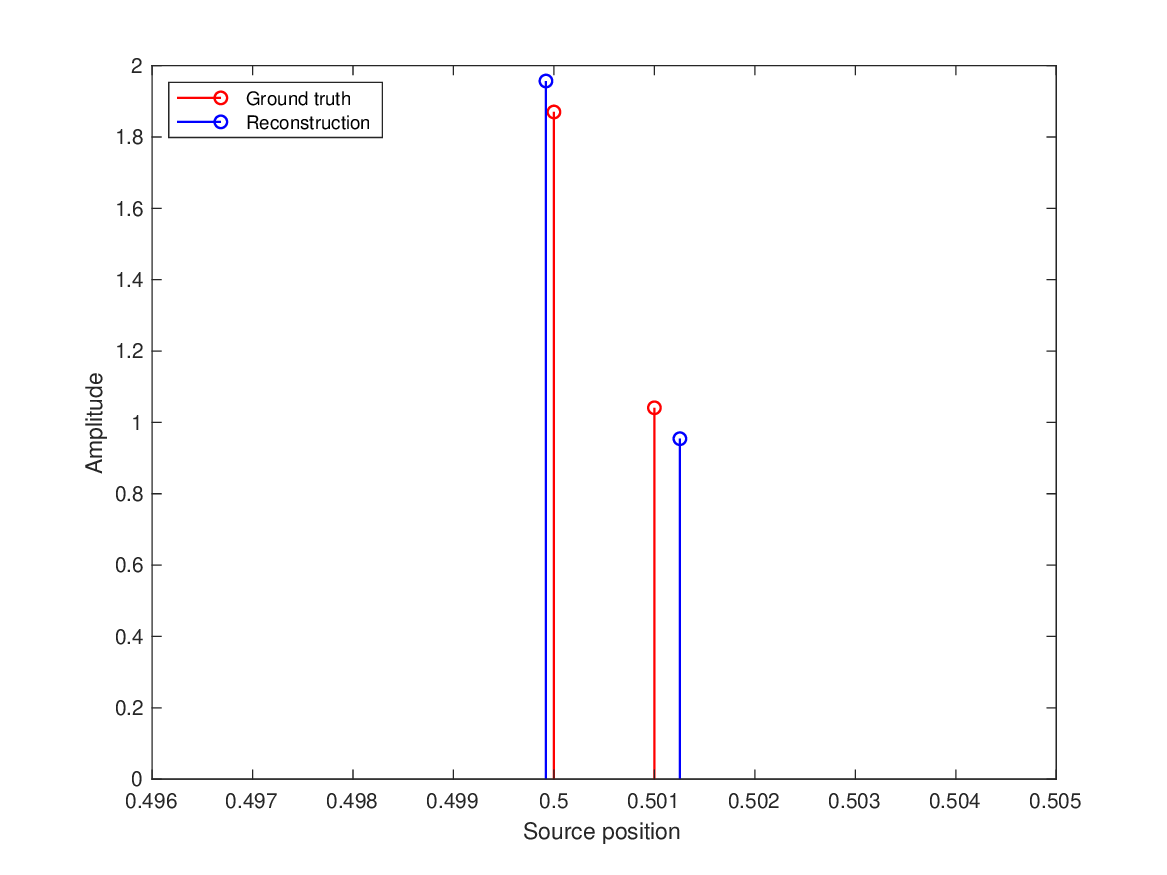}
    \caption{Reconstruction for closely positioned point sources.}
    \label{fig: point_free}
\end{figure}

\subsection{Signals with Finite Rate of Innovation} \label{subsec-7-2}
In this section, we conduct experiments on the proposed numerical scheme for signals with a finite rate of innovation. \\

In the numerical experiment, we fix $K_L=10$. The corresponding Rayleigh length is $RL=\frac{1}{20}$. The noiseless signal has the form 
\begin{align}
    \psi(x) = \sum_{j=1}^5 a_j \delta_{x_j} + \sum_{j=1}^2 b_j \delta_{y_j}'+c\delta_{z}'',
\end{align}
where $(x_1,\cdots,x_5)=(0.1,0.15,0.45,0.55,0.9)$, $(y_1,y_2)=(0.7,0.8)$, $z=0.3$. Thus, the separation distance between different sources ranges from $1RL$ to $3RL$. We call the source having the form $\delta_{x_j}$ as the monopole source, having the form $\delta'_{y_j}$ as the dipole source and $\delta''_{z}$ as the quadrupole source. We choose amplitude $a_j\sim\mathcal{U}[1,2]$ for monopole sources. We choose $b_j$'s and $c$ by ensuring that signals generated by different types of sources have comparable low-resolution signals in $\ell_2$ norm. 
We conduct 20 random experiments. In each experiment, we pick the initial guess of the source positions by perturbing  $0.4 RL$ to their ground truth. Figure \ref{fig: error_FRI} shows the reconstruction result for the experiment with average $\operatorname{SNR}=32.03$.\\

In Figure \ref{fig: error_FRI}, we observe that the absolute position reconstruction error of the quadrupole source is relatively small. This is because the loss function, especially its high-frequency part, is more sensitive to the higher-order poles.\\

To visualize the resolution-enhanced signal in the physics domain, we pick one realization from the random experiment and introduce the Dirichlet kernel 
\begin{align}
    D_{K}(x) = \sum_{k=-K}^{K} e^{2\pi ikx}. 
\end{align}
The convolution of the derivative of Dirac with the Dirichlet kernel with increasing $K_L$ leads to the significant amplification of the signal strength. More precisely
\begin{align}
    D_{K}(x) \ast \delta^{(r)}(x) = \sum_{k=-K}^{K} (2\pi ik)^r e^{2\pi ikx}.
\end{align}
Therefore different order of Diracs generate signals of different amplitudes in the physics domain. We plot these different types of signals in the physics domain in different pictures, see Figure \ref{fig: FRI_SRF}. In the Figure, the ground truth for $r\ge 1$ is generated by convoluting the ground truth derivatives of Diracs with the corresponding Dirichlet kernel.\\

\begin{figure}[!ht]
\centering
    \subfloat[Original signal]{
	\includegraphics[width=0.4\textwidth]{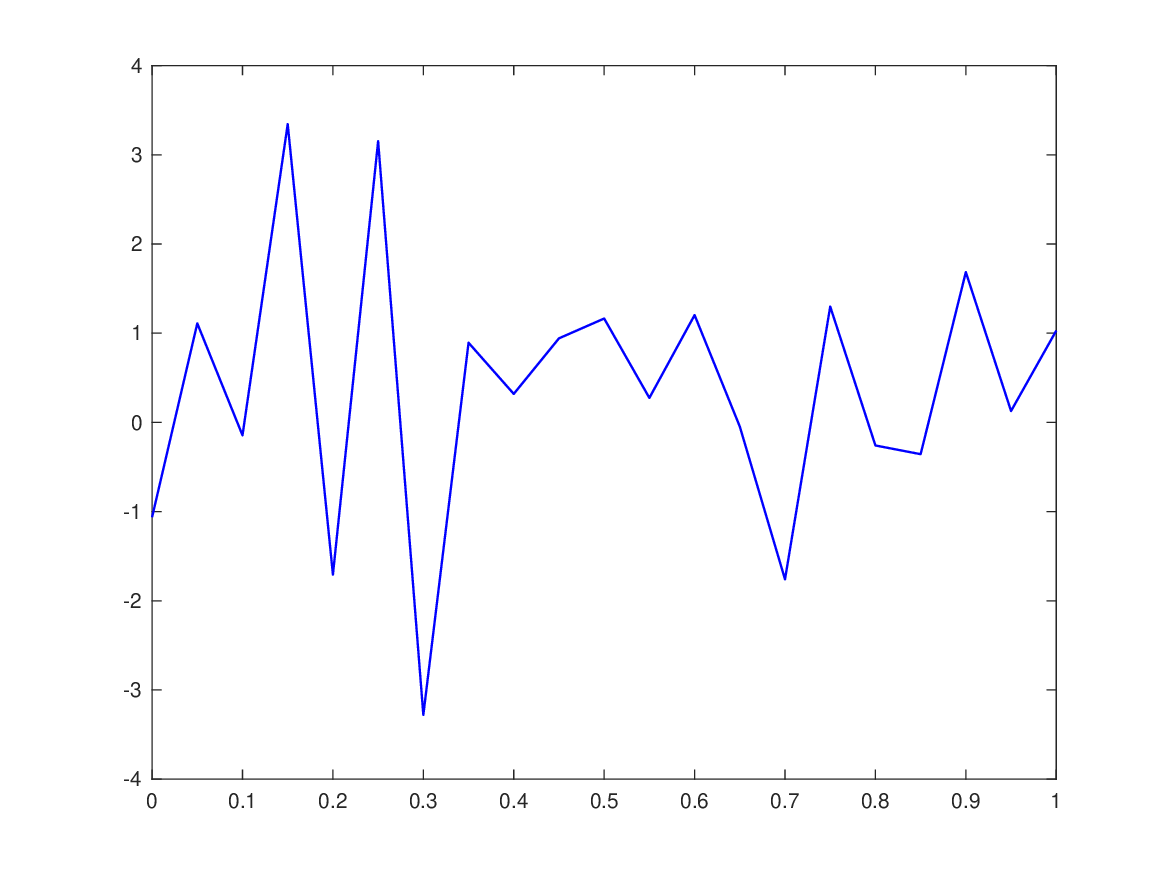}
    \label{FRI: ori}}
     \subfloat[Position reconstruction error for different types of sources.]{
	\includegraphics[width=0.4\textwidth]{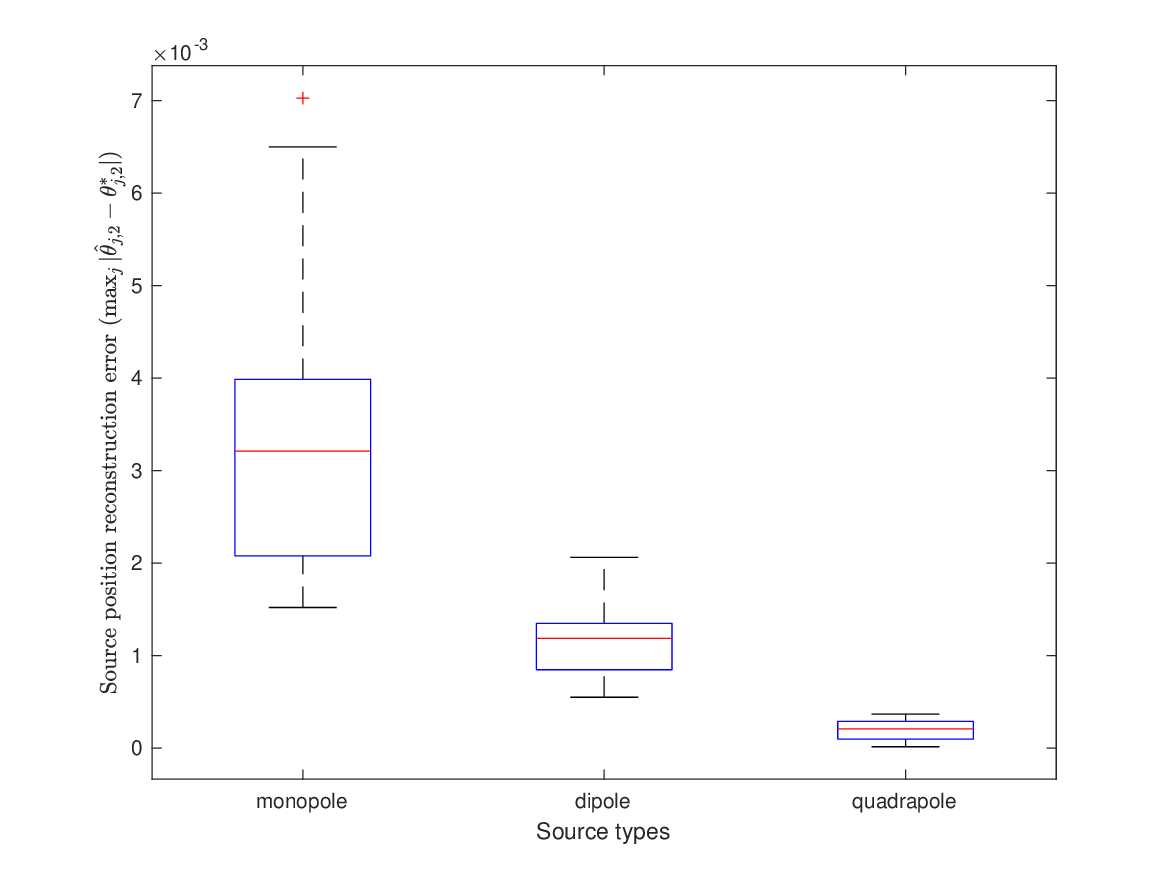}
    \label{fig: error_FRI}}
 \hspace{30mm}
    \subfloat[Monopole source, SRF=5]{
	\includegraphics[width=0.3\textwidth]{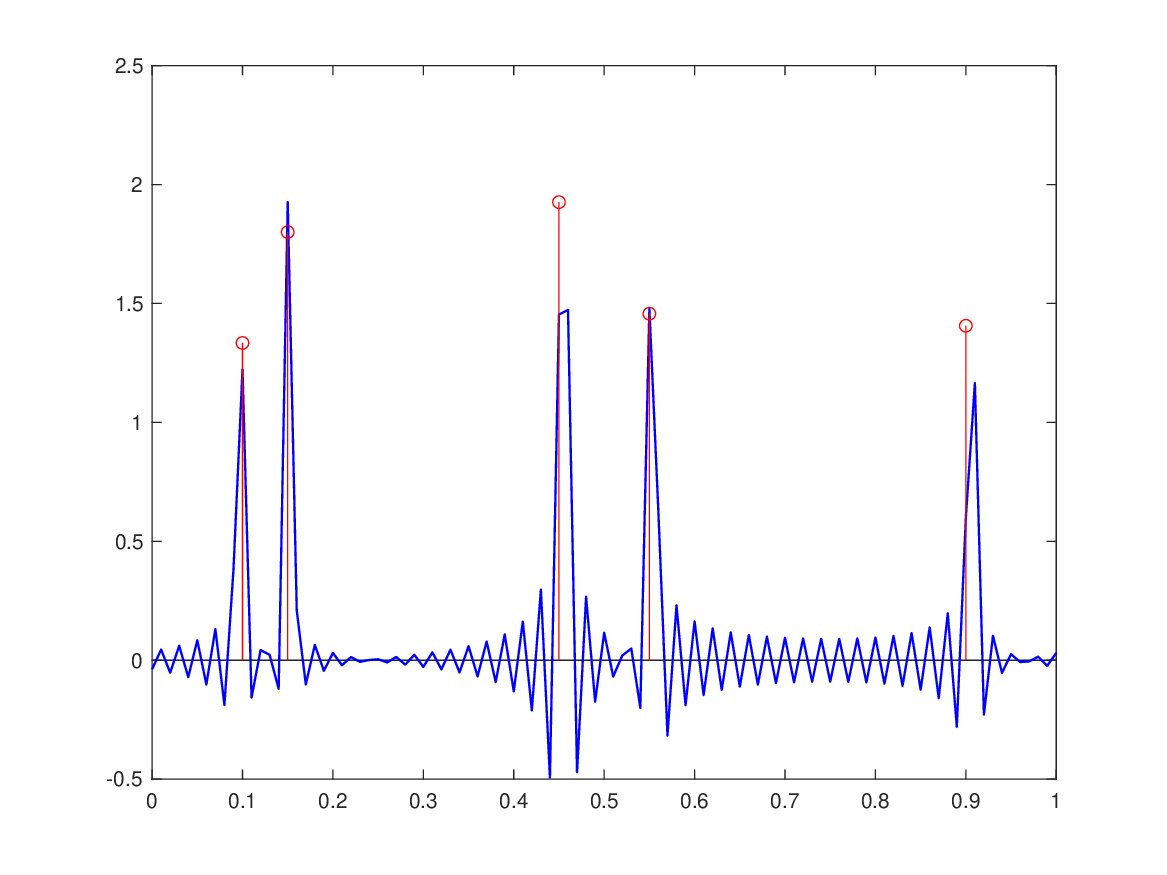}
    \label{FRI: mono 5}}
    \vspace{1mm}
    \subfloat[Dipole source, SRF=5]{
	\includegraphics[width=0.3\textwidth]{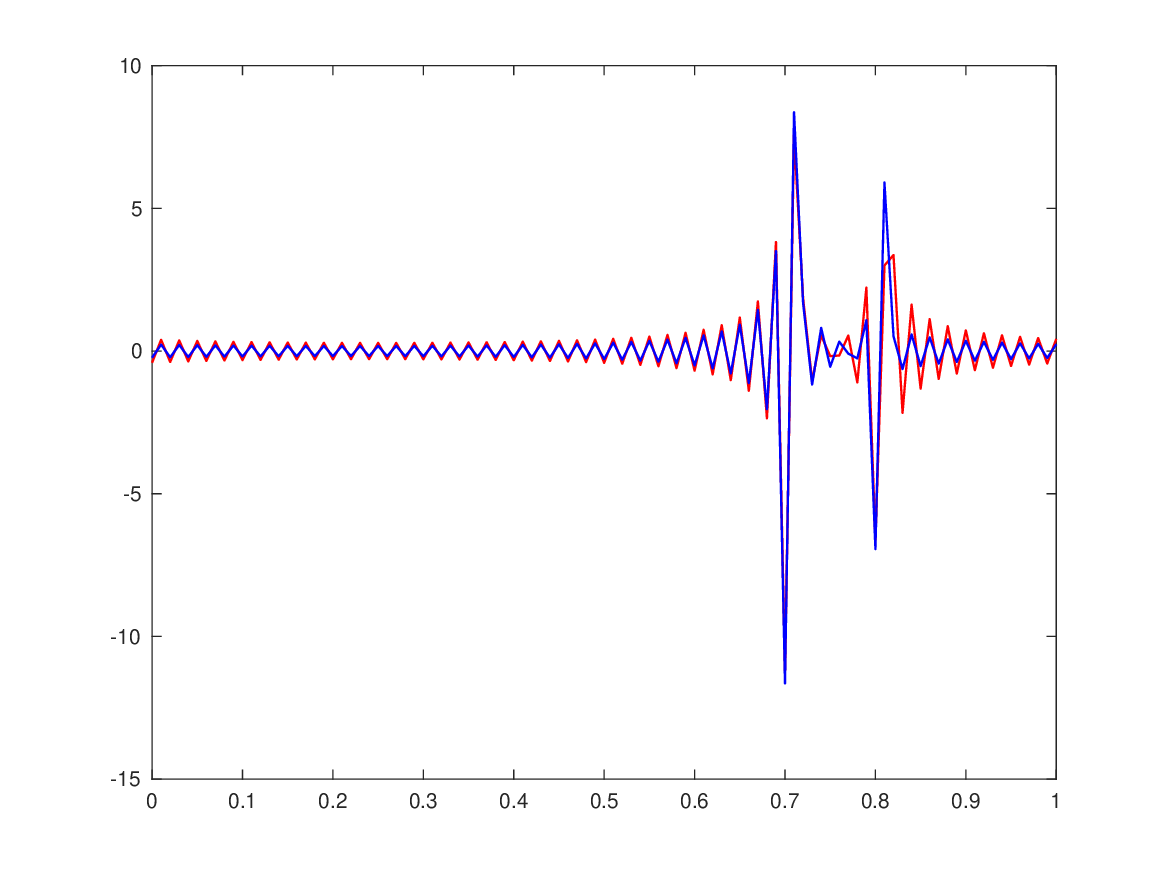}
    \label{FRI: di 5}}
    \vspace{1mm}
    \subfloat[quadrupole source, SRF=5]{
	\includegraphics[width=0.3\textwidth]{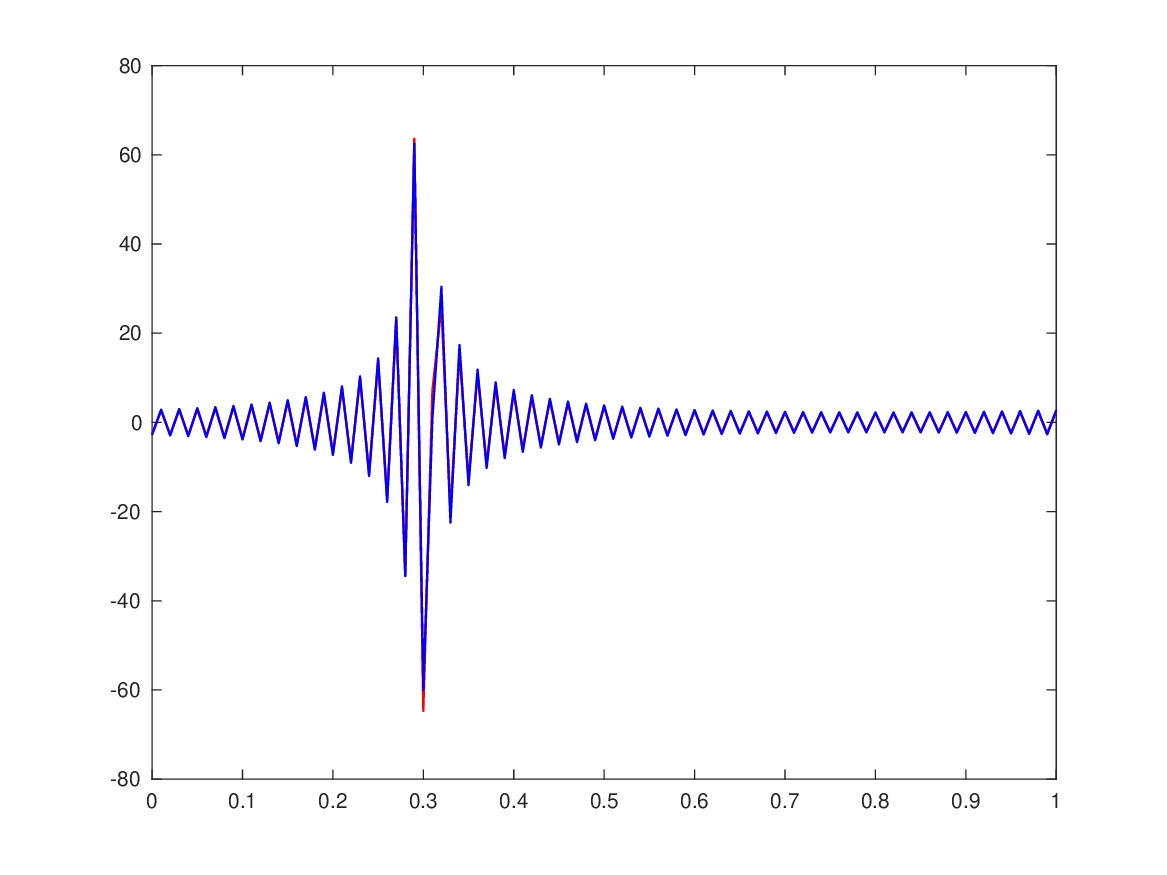}
    \label{FRI: qua 5}}

    \subfloat[Monopole source, SRF=10]{
	\includegraphics[width=0.3\textwidth]{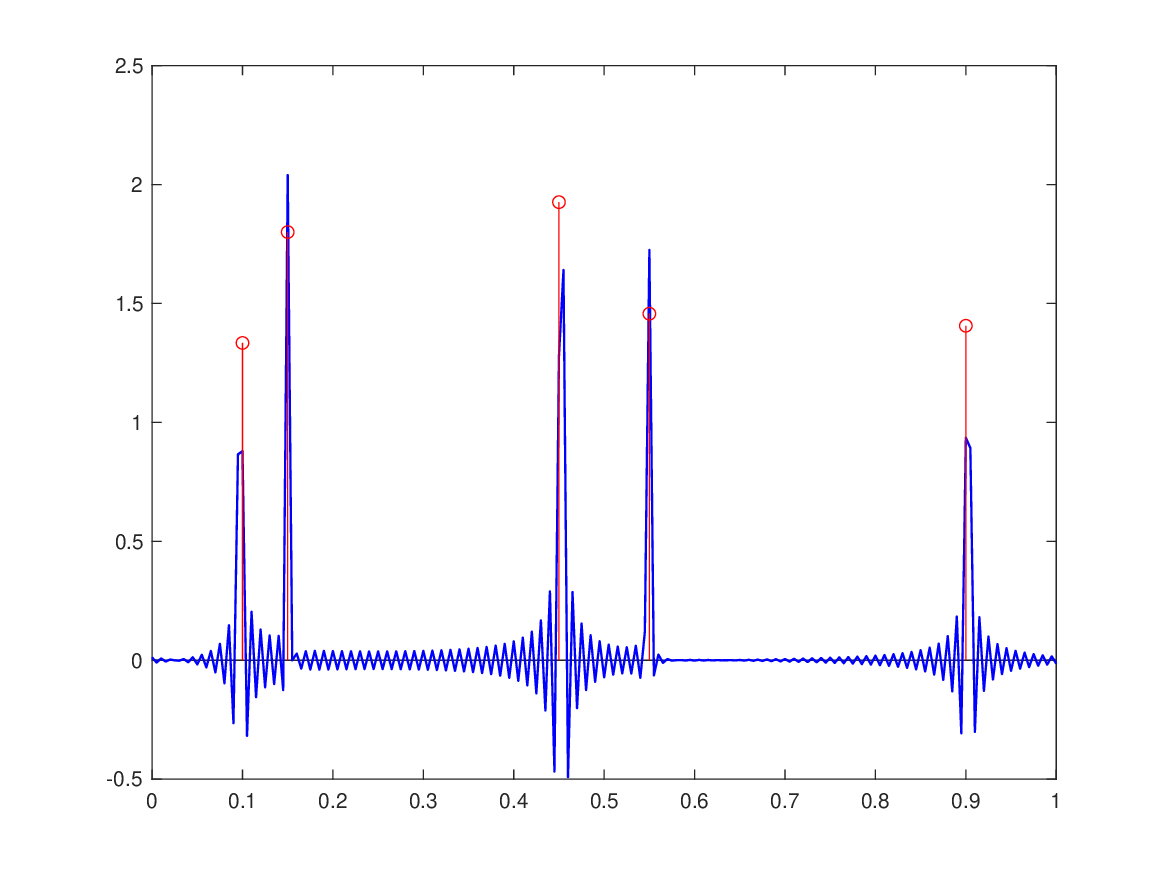}
    \label{FRI: mono 10}}
    \vspace{1mm}
    \subfloat[Dipole source, SRF=10]{
	\includegraphics[width=0.3\textwidth]{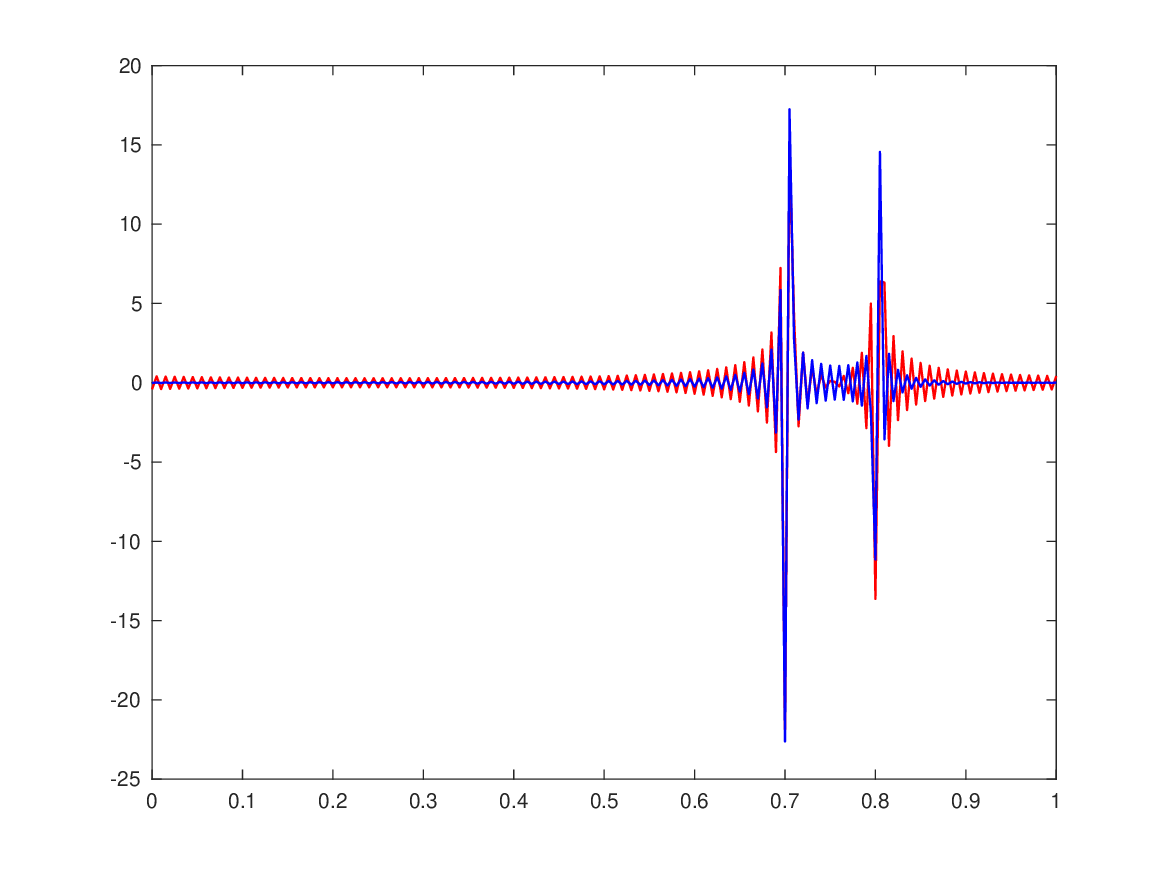}
    \label{FRI: di 10}}
    \vspace{1mm}
    \subfloat[quadrupole source, SRF=10]{
	\includegraphics[width=0.3\textwidth]{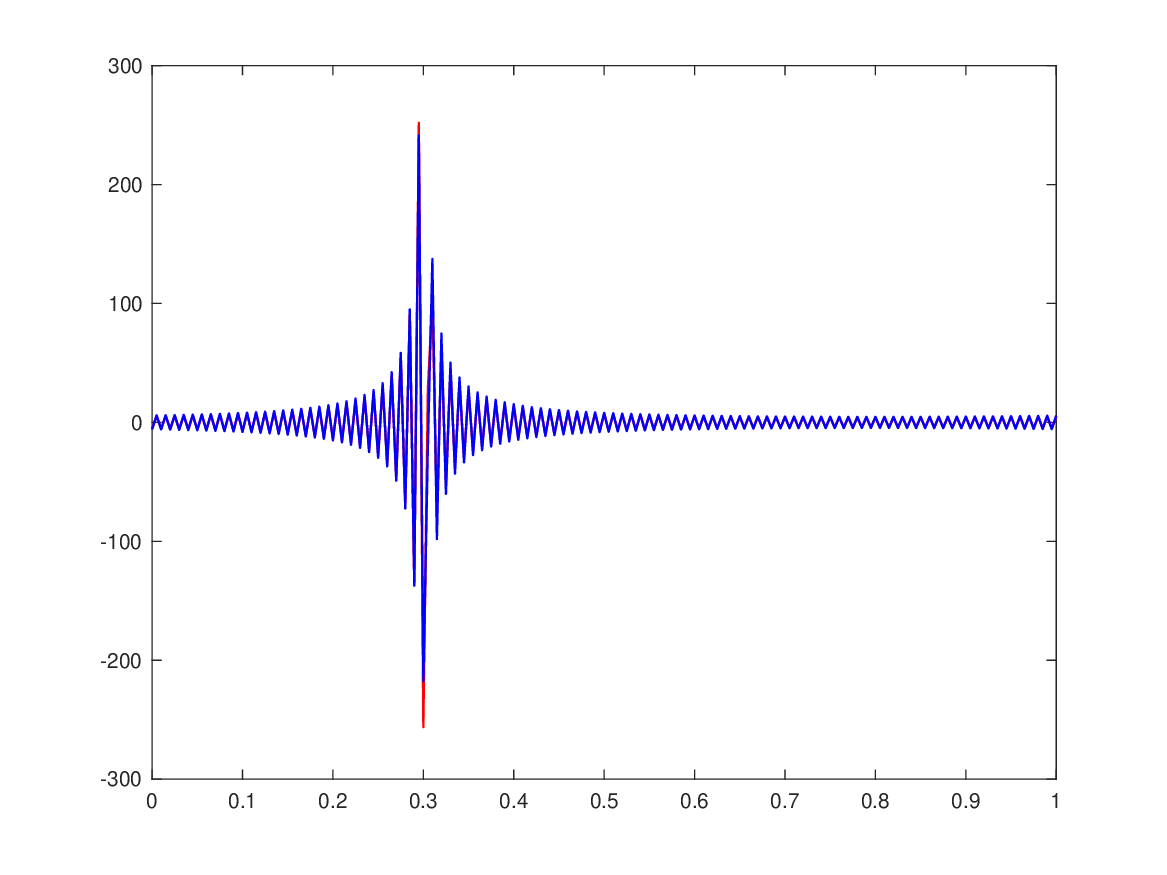}
    \label{FRI: qua 10}}
 \caption{Original signal and resolution-enhanced signals in the physics domain. The blue line in all figures shows the signal profile calculated by iFFT using the original/extrapolated Fourier data. The red line in Figure \ref{FRI: mono 5} and Figure \ref{FRI: mono 10} shows the ground truth of point sources. The red line in Figure \ref{FRI: di 5}, Figure \ref{FRI: di 10}, Figure \ref{FRI: qua 5} and Figure \ref{FRI: qua 10} shows the signal profile of the ground truth higher pole sources sampled by the corresponding Dirichlet kernel. The SNR of the experiment is $30.16$.}
\label{fig: FRI_SRF}
\end{figure}

The experiment results demonstrate a stable reconstruction of the source positions. Meanwhile, we observe that the extrapolation in the frequency domain results in reliable resolution-enhanced signals in the physics domain.

\subsection{General Signals}{\label{subsec: num general}}
We conduct numerical experiments for more complicated signals. In the physics domain, we assume that the signal is a linear combination of components having the following form 
\begin{align}
    c(x;\boldsymbol{\mu}) = e^{i(\mu_0 x^2+\mu_1 x+\mu_2)}\cdot e^{-\frac{(x-\mu_3)^2}{2\mu_4^2}},
\end{align}
where $\mu_0,\mu_1,\mu_2 \in \Real$, $\mu_3 \in [0,1)$, $\mu_4\in (0,\infty)$. Then, the signal can be written as
$$\psi(x) = \sum_{j=1}^n b_j c(x;\boldsymbol{\mu}_j).$$ Suppose we have low-frequency data in the frequency domain and aim to recover the high-frequency data to achieve super-resolution.\\

The experiment considers a signal having $4$ components with different $\boldsymbol{\mu}$'s.  Write the signal in the following equivalent form
$$\psi(x) = \sum_{j=1}^4 (\kappa_{0,j}+\kappa_{1,j} i)\cdot e^{i(\kappa_{2,j} x^2+\kappa_{3,j} x)}\cdot e^{-\frac{(x-\kappa_{4,j} )^2}{2\kappa_{5,j} ^2}}.$$\\

Notice that, different from the signal models in Section \ref{subsec: num point} and \ref{subsec-7-2}, we do not have the explicit form of the Fourier transform for the signal above. In the experiment, we set $(\kappa_{4,1},\kappa_{5,1}) = (0.2,0.02)$, $(\kappa_{4,2},\kappa_{5,2}) = (0.4,0.03)$, $(\kappa_{4,3},\kappa_{5,3}) = (0.6,0.01)$, and $(\kappa_{4,4},\kappa_{5,4}) = (0.8,0.01)$. In the physical space $[0,1)$, we setup a grid $\{x_t^{(c)}\}$, defined by $x_{t}^{(c)} = \frac{t}{127}$, $t=0,\cdots,127$, for the calculation of FFT. Using this grid, we generate $32$ noisy low-frequency data. Then, the associated grid in $[0,1)$, $\{x^{(o)}_t\}$, has step size $\frac{1}{31}$. We solve the parameters $\{\kappa_{p,q}\}$ by the low-frequency data and draw the picture of the signal on two finer grids having step size $\frac{1}{127}$ and $\frac{1}{4095}$ respectively. We pick the initial guess of $\kappa_{4,j}$ by adding or minus a constant around $0.07$ to the ground truth (noticing that the $RL=\frac{1}{32}$ for the system). Consequently, the initial guess for $\kappa_{4,1}$ has error $3.5\kappa_{5,1}$, the initial guess for $\kappa_{4,2}$ has error more than $2\kappa_{5,2}$, and the initial guess for $\kappa_{4,3}$ and $\kappa_{4,4}$ has error $7\kappa_{5,3}$.  During the optimization, we restrict $\kappa_{4,j}$ to $ (0,1)$ by re-initialization if $\kappa_{4,j}\notin(0,1)$ in some step. We visualize the signals in the physics domain, see Figure \ref{fig: gauss}. The original signal is calculated from the iFFT of the noisy low-frequency samples. The resolution-enhanced signals are calculated by the interpolation of the recovered signal profile. The experiment is conducted under $\operatorname{SNR}=11.35$. \\

\begin{figure}[!ht]
\centering
    \subfloat[Absolute value, Original]{
	\includegraphics[width=0.3\textwidth]{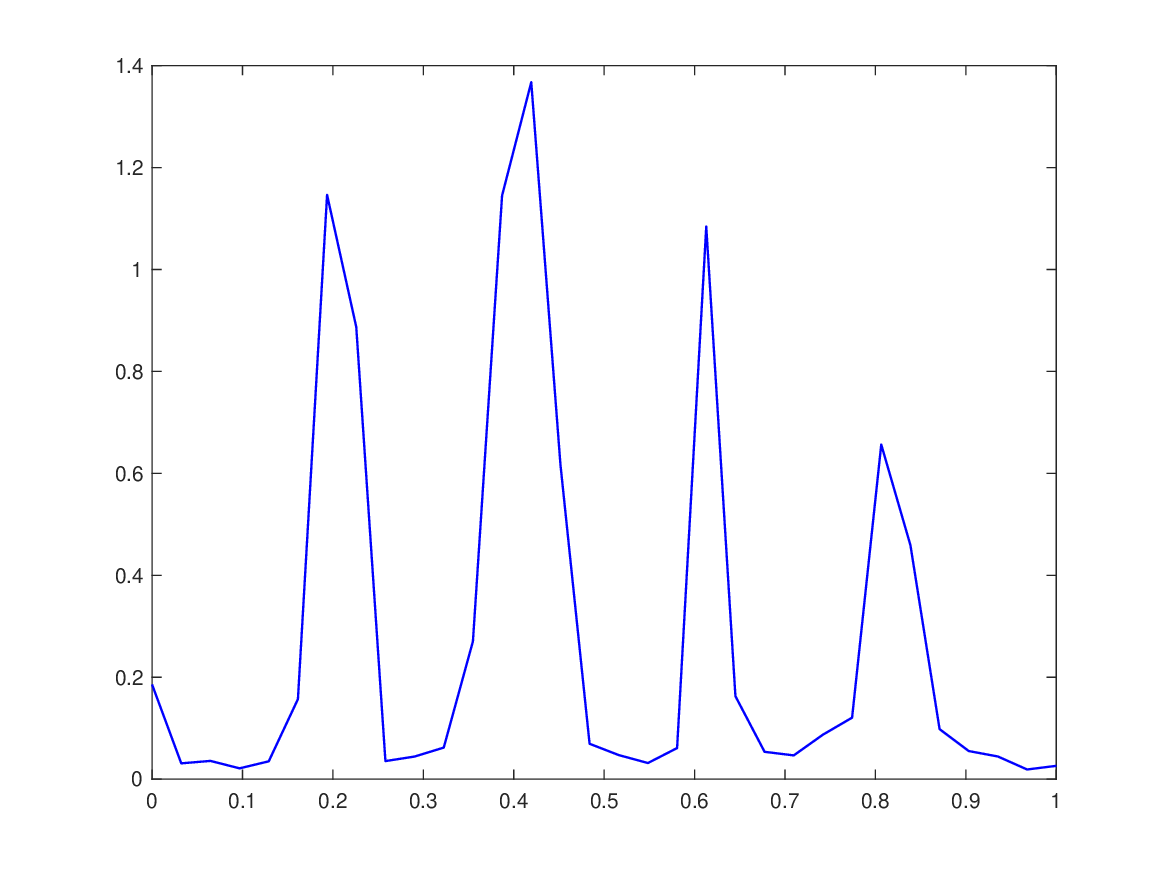}}
    \vspace{1mm}
    \subfloat[Real part, Original]{
	\includegraphics[width=0.3\textwidth]{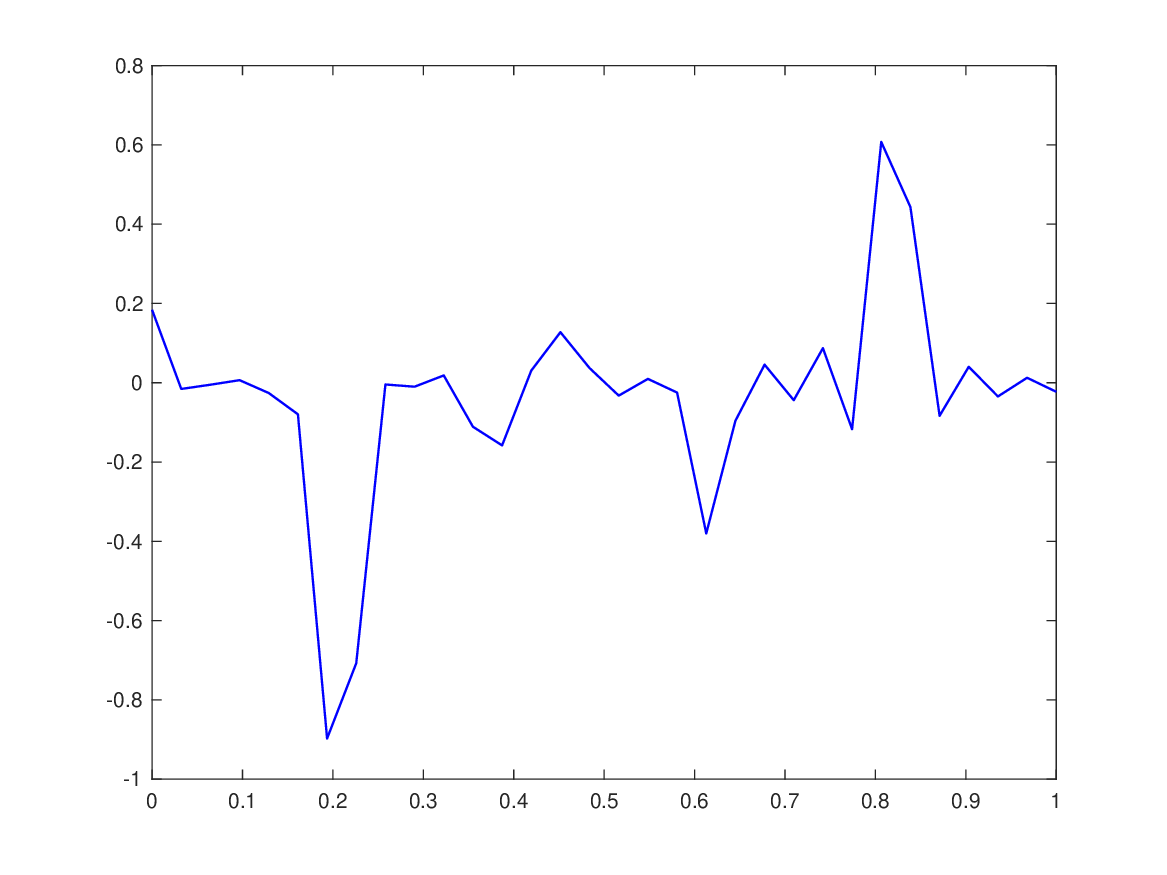}}
    \vspace{1mm}
    \subfloat[Imaginary part, Original]{
	\includegraphics[width=0.3\textwidth]{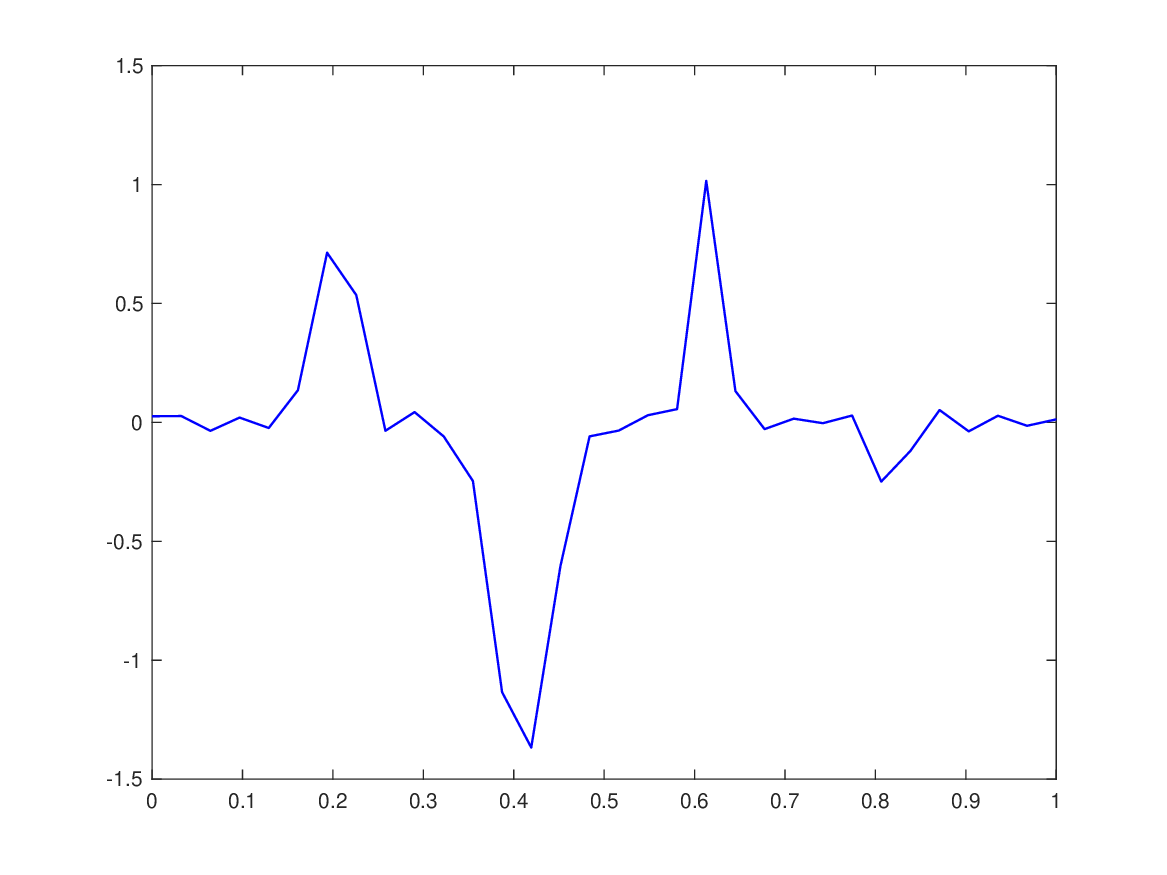}}
 
    \subfloat[Absolute value, SRF=4]{
	\includegraphics[width=0.3\textwidth]{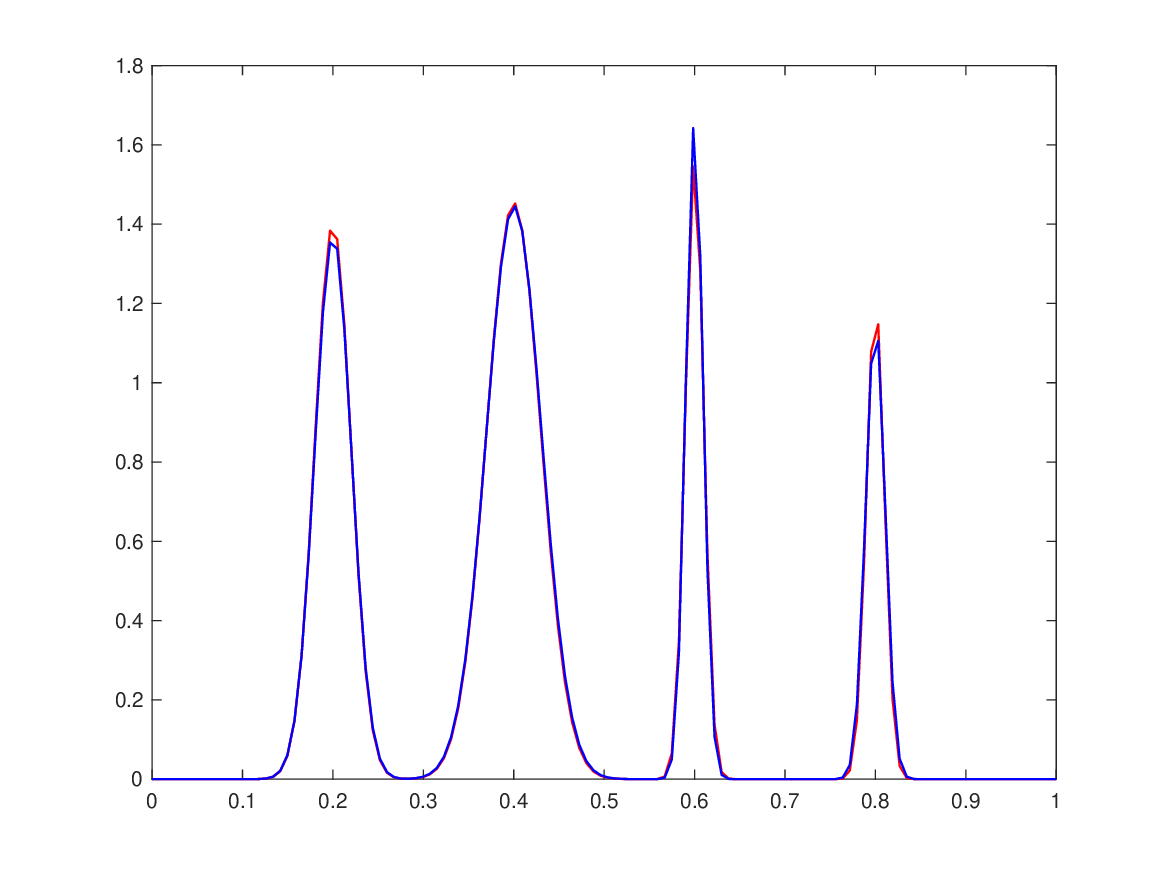}}
    \vspace{1mm}
    \subfloat[Real part, SRF=4]{
	\includegraphics[width=0.3\textwidth]{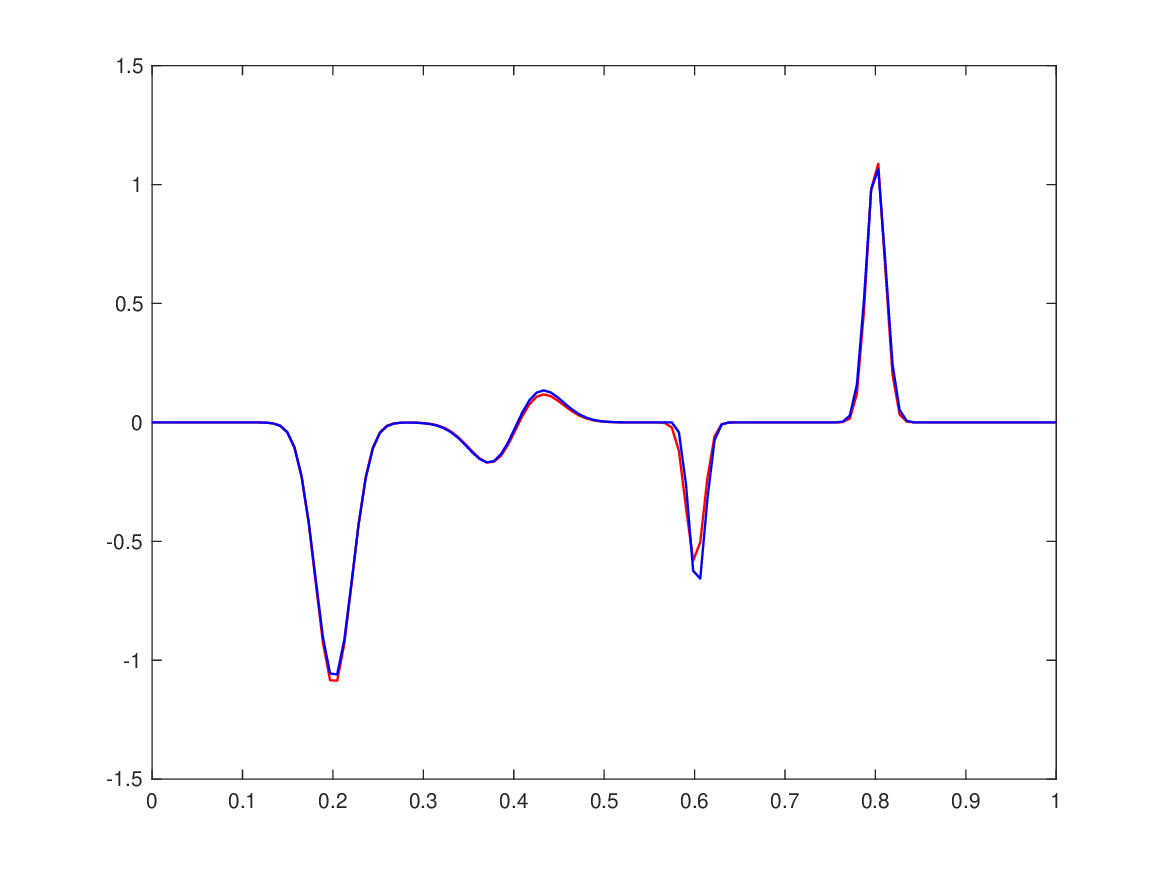}}
    \vspace{1mm}
    \subfloat[Imaginary part, SRF=4]{
	\includegraphics[width=0.3\textwidth]{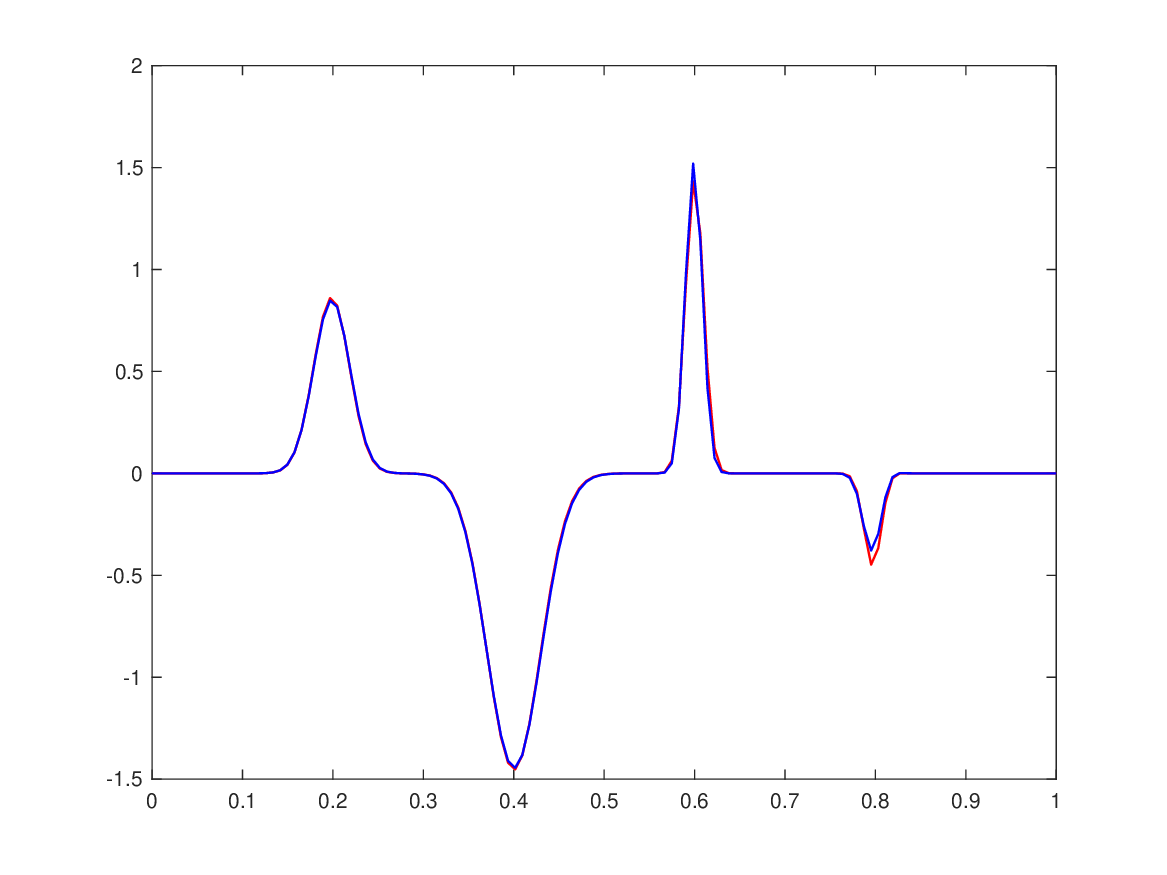}}

    \subfloat[Absolute value, SRF=128]{
	\includegraphics[width=0.3\textwidth]{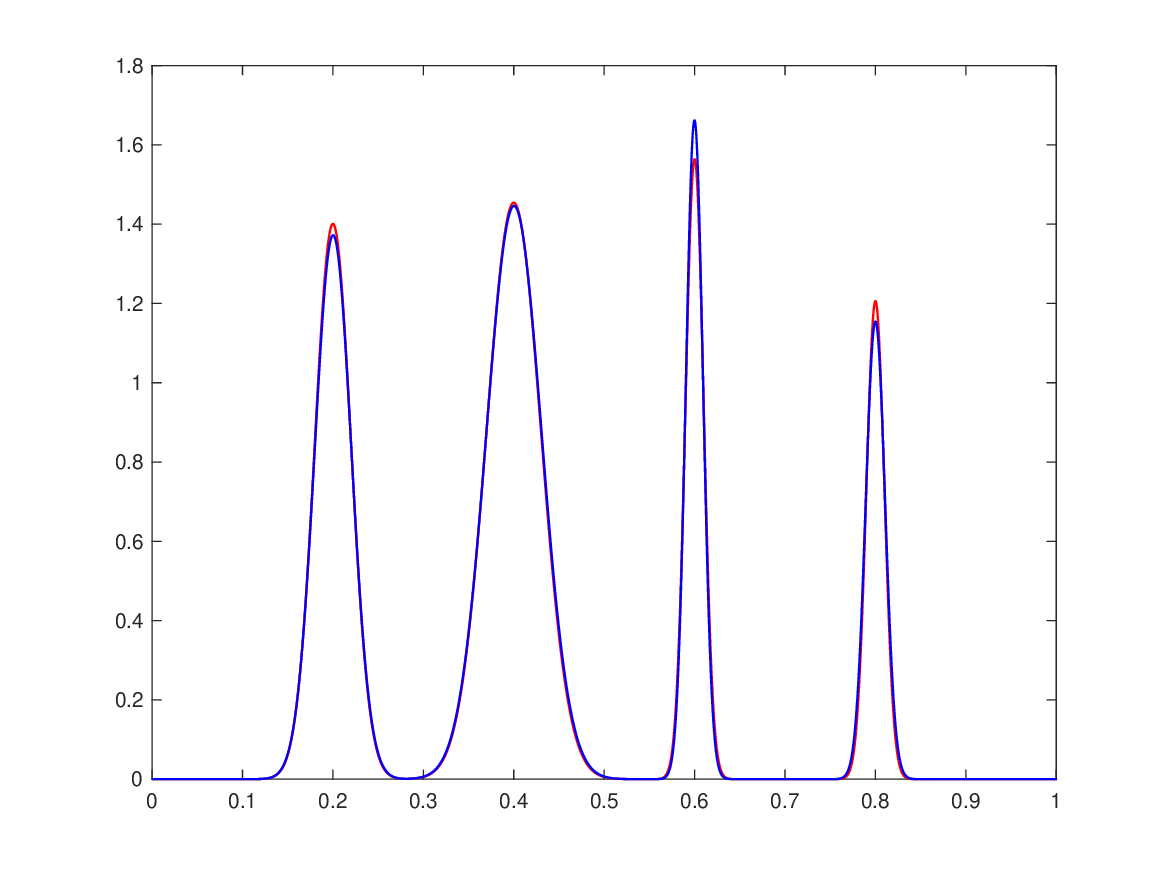}}
    \vspace{1mm}
    \subfloat[Real part, SRF=128]{
	\includegraphics[width=0.3\textwidth]{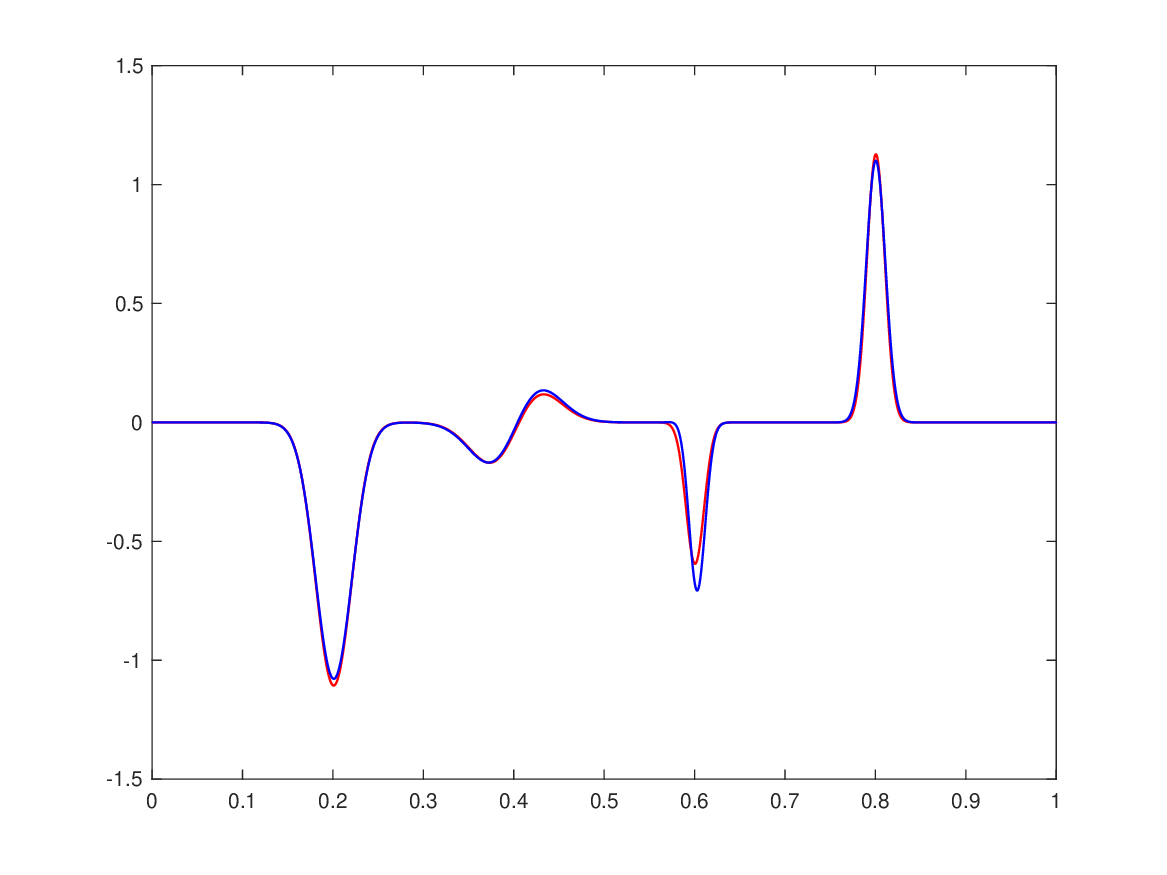}}
    \vspace{1mm}
    \subfloat[Imaginary part, SRF=128]{
	\includegraphics[width=0.3\textwidth]{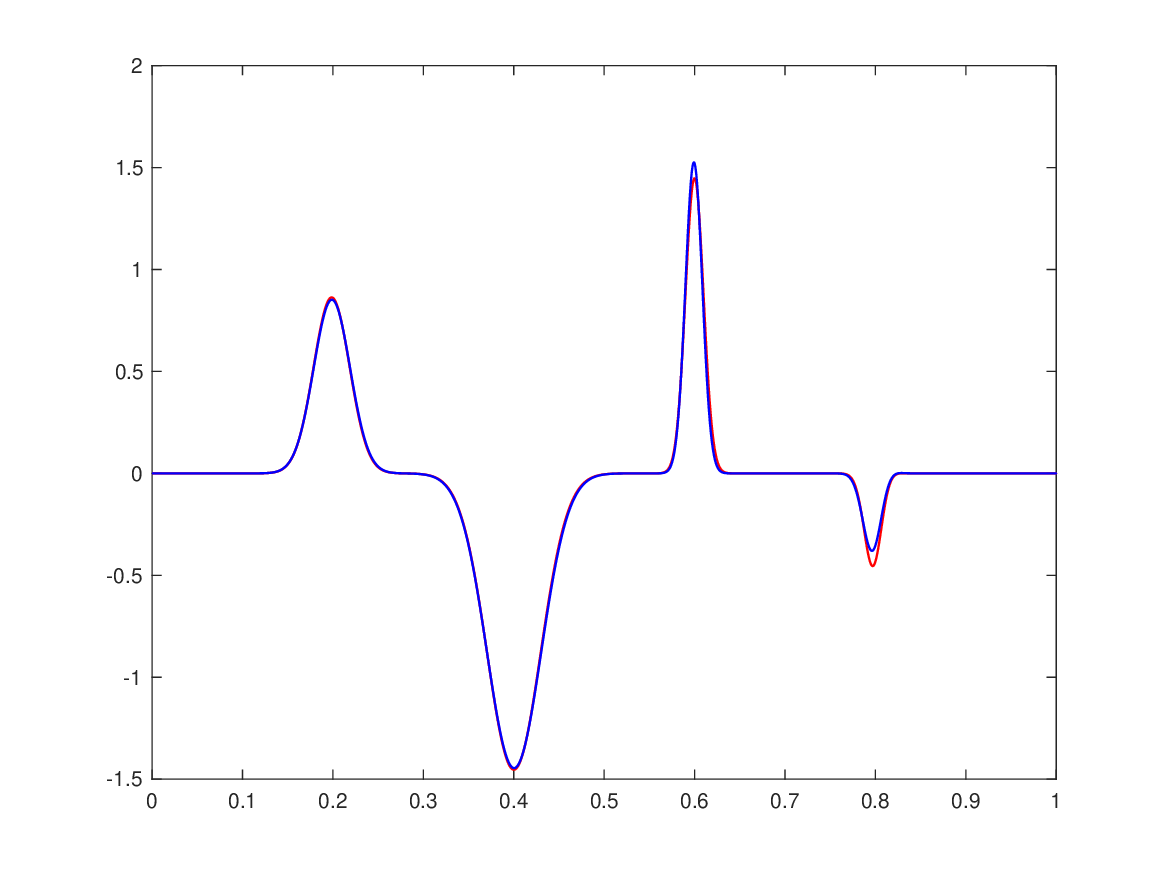}}
 \caption{Original signals and resolution-enhanced signals in the physics domain. The red line represents the ground truth, and the reconstructed ones are shown in blue. We define the super-resolution factor (SRF) in the physics domain as the quotient of the two grid point numbers. The first row shows the original signal. The second and third rows show the resolution-enhanced signal with $\operatorname{SRF}=4$ and $10$, respectively. The first column is the absolute value of the signal profile, and the second and third columns are the real and imaginary parts of the signal, respectively. The SNR of the experiment is $11.35$.}
\label{fig: gauss}
\end{figure}

Notice that using a finer grid of the physics domain to calculate FFT gives a more accurate approximation of the Fourier transform but at the expense of higher computational cost since the Fourier transform is excuted in each iteration for the optimization problem.\\

Finally, we point out that the signal profile used in this experiment is smooth and thus has a rapidly decaying Fourier transform. Therefore, the high-frequency data is very noisy, and the low-frequency data plays an important role in the parameter estimation. 

\section{Extension}\label{sec: extension}
In this section, we discuss several extensions of the proposed framework to other problems with similar structures.

\subsection{Data Completion}
For a typical data completion problem, the sampling grid is slightly different from the one used in the super-resolution problem. We denote the full grid as $\mathcal{M}_F=\{\omega_k\}_{k\in\Lambda}$ and the partial grid as $\mathcal{M}_P=\{\omega_k\}_{k\in\Lambda'}$, where $\Lambda'\subset\Lambda$. For given data space $\mathcal{M}$, we can similarly define the partial sampling operator $G_P$ and full sampling operator $G_F$ similar to low- and high-resolution sampling operators, respectively. Following the routine as in Section \ref{subsec: framework}, the model-based data completion framework can be developed. See the following Figure \ref{fig: model-based DC} for illustration. In the figure, $\mathcal{Q}$ is the downsampling operator and $\mathcal{L}$ sampling lifting operator satisfying the conditions (\ref{commu:Q}) and (\ref{commu:L}). Further, once the suitable modeling pair $(\Theta,\mathcal{P})$ is determined, the numerical methodology and theoretical estimates also apply to the model-based data completion framework.
\begin{figure}[htb]
    \centering    
    \begin{tikzcd}[row sep=small,column sep = large]
              &        &\mathcal{H}_P \arrow[dd,shift left=1.5,red,"\mathcal{L}"] \arrow[from=dd,"\mathcal{Q}"]\\ 
        \Theta\arrow[urr,bend left=15, "\mathcal{P}_P"] 
        \arrow[drr, bend right=15,"\mathcal{P}_F",swap]\arrow[r,"\mathcal{P}"]&
        \mathcal{M}\arrow[ur,"G_P"]\arrow[dr,"G_F",swap]&         \\ 
                    &                &\mathcal{H}_F 
    \end{tikzcd}
    \caption{Model-based data completion framework}
    \label{fig: model-based DC}
\end{figure}
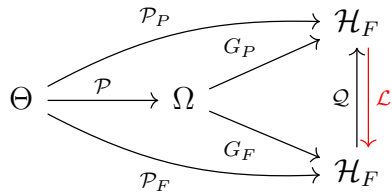
\subsection{Deep Learning}

As seen in the previous sections, to achieve stable super-resolution, we need prior information about the modeling pair $(\Theta,\mathcal{P})$, especially the model map. For natural images, this is hard even for simple objects. The deep learning-based SISR resolves this issue by approximating the resolution-enhancing map $\mathcal{L}$ via deep hidden layers.

For generative models, one of the key concepts is the latent space, which is used to represent the signal. The representation captures the intrinsic structure of the signal which lies in a high-dimensional space. The crucial step is to learn the map from the latent space to the signal space. The Model-SR has a similar structure. The parameter space $\Theta$ acts as an analogy to the latent space and $\mathcal{M}$ is signal space that is embedded in high (even infinite) dimensional space. The model map $\mathcal{P}$ is the bridge between the latent and signal space. Thus, the learning step in generative models can be interpreted as finding (an approximation of) a model map for the signal space. 

The relationship between Model-SR and deep learning is bidirectional. In this paper, we present a theoretical framework for the super-resolution problem by intrinsically modeling the signal space. Such a framework guarantees a stable extrapolation of high-frequency data from the low-frequency measurement in the frequency domain. We expect that the theoretical foundations of Model-SR will offer valuable insights and serve as a starting point for understanding the theory behind neural network-based generative models. Given the difficulty in identifying the appropriate modeling pair for the signal space, the methods used in neural network-based generative models can be instrumental in discovering efficient representations and modeling strategies within the Model-SR framework. We believe that integrating the techniques of neural networks with Model-SR will provide a powerful tool for solving inverse problems with similar structural characteristics.

\section{Conclusion}\label{sec: discussion}
In this paper, we develop the theory of the model-based super-resolution framework. We present the general mathematical theory along with concrete examples and numerical experiments. We show that under suitable modeling, super-resolution problems enjoy certain stability. 

Within the proposed framework, the challenging part is the non-convex nature of the objective function, for which good initial guesses are needed for the convergence of gradient descent algorithms. Efficient methods for selecting good initial guesses in each concrete model shall be studied, and we leave it as future work. The model-based framework can be generalized to other problems. We expect that the results shown in this paper offer another perspective on the super-resolution problem and take a step forward in understanding model-based problems and problems having a similar structure.
\section{Appendix}
\subsection{Proof of Theorem \ref{thm: abstract stability}}\label{prf: general stability}

If $U$ is a singleton, the result is trivial. We assume that $\operatorname{diam} U := \sup \{ \|\theta-\theta'\|: \theta,\theta'\in U\}>0$, and it is clear that $\operatorname{diam} U<\infty$ since $U$ is compact.\\

    \textbf{Step 1.} Large distance case\\
    
    For any given $r>0$, we consider the set $S := \{ (\theta,\theta')\in U \times U : \|\theta-\theta'\|\ge r\}$. If $S$ is empty, then it is trivial. Otherwise, notice that $S$ is compact and the map $(\theta,\theta')\mapsto \|\PL(\theta)-\PL(\theta')\|$ is continuous, we can then define
    \[
    C_1 := \min \{ \|\PL(\theta)-\PL(\theta')\|: (\theta,\theta')\in S \}.
    \]
    The injectivity of $\PL$ guarantees that $C_1'>0$, and we have 
    \begin{align}{\label{abs_thm_res_1}}
        \|\theta-\theta'\| \le C_{U} \cdot \|\PL(\theta)-\PL(\theta')\|, 
    \end{align}
    with $C_{U} = \frac{\operatorname{diam} U}{C_1}$.\\
    
   \textbf{Step 2.} Short distance case\\
   
   We assume that $\theta,\theta'\in U$ satisfying $\|\theta-\theta'\| \le r_U$ for some $r_U>0$ to be determined later. Let $\gamma(t)$ be the line segment defined by $\gamma(t) = (1-t)\theta+t\theta'\in U, \ t\in [0,1]$. By the convexity of $U$, we have $\gamma(t)\in U$ for all $t\in [0,1]$. Combining the fundamental theorem of calculus and $\PL \in C^1(\Real^m,\HL)$, we can write 
   \[
   \PL(\theta')-\PL(\theta) = \int_0^1 D(\PL\circ\gamma)(t) dt = \int_0^1 D\PL(\gamma(t))(\theta'-\theta) dt.
   \]
   Therefore,
   $$
D\PL(\theta)(\theta-\theta')=\PL(\theta)-\PL(\theta')+\int_0^1\left[D\PL(\theta)-D\PL(\gamma(t))\right](\theta-\theta') d t,
$$
and further we have
$$
\left\|D\PL(\theta)(\theta-\theta')\right\| \leqslant\|\PL(\theta)-\PL(\theta')\|+\int_0^1\left\|D\PL(\theta)-D\PL(\gamma(t))\right\|_{op}\cdot\|\theta-\theta'\| d t .
$$
By rearrangement and straightforward estimation, we have
\begin{align}{\label{abs_thm_est1}}
\frac{\|\PL(\theta)-\PL(\theta')\|}{\|\theta-\theta'\|} \ge \inf _{z \in \mathbb{S}^{m-1}}\left\{\left\|D\PL(\theta) z\right\|\right\}-\sup _{t \in[0,1]}\left\|D\PL(\theta)-D\PL(\gamma(t))\right\|_{op},
\end{align}
which holds for any $\theta, \theta' \in U$ satisfying $\|\theta-\theta'\|\le r_U$, where $\mathbb{S}^{m-1}$ is the unit sphere. 
We then show the right-hand side of (\ref{abs_thm_est1}) is bounded below away from 0.\\

The injectivity of $D\PL(\theta)$ in $U$ as well as the compactness of $U$ and $\mathbb{S}^{m-1}$ yield that
$$
C_2:= \inf _{\theta \in U, z \in \mathbb{S}^{m-1}}\left\|D\PL(\theta) z\right\|>0 .
$$

On the other hand, since $\PL \in C^1(\Real^m, \HL)$, $U$ is compact, and $\gamma(t) \in U$ for any $t \in[0,1]$, there exists a non-decreasing modulus of continuity $\omega_{D\PL, U}$ such that
$$
\left\|D\PL(\theta)-D\PL(\gamma(t))\right\|_{op} \leqslant \omega_{D\PL, U}\left(\|\theta-\gamma(t)\|\right) \leqslant \omega_{D\PL, U}\left(\|\theta-\theta'\|\right) \leqslant \omega_{D\PL, U}(r_U),
$$
for every $t \in[0,1]$. \\

Then, by choosing a sufficient small $r_U>0$ such that $\omega_{D\PL, U}(r) \leqslant \frac{C_2}{2}$, we have 
\begin{align}
        \|\theta-\theta'\| \le C_{U} \cdot \|\PL(\theta)-\PL(\theta')\|, 
\end{align}
with $C_U = \frac{2}{C_2}$.\\

\textbf{Step 3.} \\

Using the convexity of $U$, it is straightforward that
\begin{align}
    \| \PH(\theta)-\PH(\theta')\| \le \|D\PH\|_{op}\cdot \| \theta-\theta' \| \le C_{U}\cdot \|D\PH\|_{op} \cdot\| \PL(\theta)-\PL(\theta')\|.
\end{align}

\subsection{Proof of Theorem \ref{thm: abs_convergence}}
\textbf{Step 1.}\\

let $\hat{\theta}\in U$ be a solution to $(\ref{opt_problem})$, we have 
\begin{align}
    \| \PL(\hat{\theta})-y \|\le \|\PL(\theta^*)-y \| = \| W\|<\sigma.
\end{align}
Hence, $\hat{\theta}$ is an admissible solution.\\

\textbf{Step 2.}\\

Straightforward calculation gives
\begin{align*}
    &\nabla \varphi(\theta) = \Re\{ \overline{D\PL}^\top \bra{\PL(\theta)-y}\},\\
    &\nabla^2 \varphi(\theta) = \Re\{\overline{D\PL}^\top D\PL+ \mathcal{N}(\theta)\},
\end{align*}
where $\mathcal{N}(\theta)=\sum_{k=-K_L}^{K_L} \bra{\mathcal{P}_{L,k}(\theta)-y_k}\cdot \nabla^2 \overline{\mathcal{P}}_{L,k}$.\\

By the assumption that $\PL\in C^2(\Real^m,\HL)$ and $U$ is compact, there exists $A>0$, s.t. $\|\overline{D\PL}^\top D\PL\|\le A$ and $\|\nabla^2 \mathcal{P}_{L,k} \|_{op}\le A$, for $k=-K_L,\cdots,K_L$. Then, it is clear that there exists $\nu_u>0$, s.t. $\nabla^2\varphi(\theta) \preccurlyeq \nu_u \mathcal{I}$, $\forall \theta\in U$.\\

By the assumption that $D\PL(\theta)$ is injective for all $\theta\in U$, the matrix $\overline{D\PL(\hat{\theta})}^\top D\PL(\hat{\theta})$ is positive definite, which implies $\sigma_{\min}\bra{D\PL(\hat{\theta})}>0$. Let $\xi = (\|\nabla^2\mathcal{P}_{L,-K_L}\|_{op},\cdots,\|\nabla^2\mathcal{P}_{L,K_L}\|_{op})$. Notice that,
\begin{align}
    \|\mathcal{N}(\hat{\theta})\|_{op} 
    &\le \sum_{k=-K_L}^{K_L} \left| \mathcal{P}_{L,k}(\hat \theta)-y_k \right| \| \nabla^2 \overline{\mathcal{P}}_{L,k}\|_{op}\notag \\
    &\le \bra{\sum_{k=-K_L}^{K_L} \left| \mathcal{P}_{L,k}(\hat \theta)-y_k \right|^2 }^{1/2} \cdot \| \xi\| \notag \\
    & < \|\xi\|\cdot \sigma.
\end{align}
By Weyl's theorem, we have
\begin{align*}
    \lambda_{\min}\bra{\overline{D\PL(\hat{\theta})}^\top D\PL(\hat{\theta})+\mathcal{N}(\hat{\theta})}
    &\ge \lambda_{\min}\bra{\overline{D\PL(\hat{\theta})}^\top D\PL(\hat{\theta})}-\|\mathcal{N}(\hat \theta)\|_{op} \\
    &> \sigma^2_{\min}\bra{D\PL(\hat{\theta})}-\|\xi\|\cdot \sigma\ge 0.
\end{align*}
Therefore, the matrix $\overline{D\PL(\hat{\theta})}^\top D\PL(\hat{\theta})+\mathcal{N}(\hat{\theta})$ is positive definite and so is $\nabla^2 \varphi(\hat{\theta})$. Since $\nabla^2 \varphi(\theta)$ depends continuously on $\theta$, there exists a closed neighborhood $U_{\hat{\theta}}\subset U$, such that $\nabla^2 \varphi(\theta)$ is positive definite for all $\theta\in U_{\hat{\theta}}$. The proof is finished by taking $$\nu_l = \inf_{\theta\in U_{\hat{\theta}}} \lambda_{\min}\bra{\nabla^2\varphi(\theta)}.$$

\subsection{Proof of Theorem \ref{thm: stability point source}}
We notice that $U=[-\frac{n-1}{4K},\frac{n-1}{4K}]$ is compact and convex. According to Theorem \ref{thm: abstract stability}, we only need to verify $\PL|_{U}$ is injective and $D\PL(\theta)$ is injective for all $\theta \in U$.\\
\textbf{Step 1.} Injectivity of $\PL|_{U}$: \\

For $\theta,\theta' \in U$, we assume that $\PL(\theta) = \PL(\theta')$, i.e. 
\begin{align}{\label{injective_eq_1}}
    \sum_{j=1}^n \theta_{j,1}e^{-2\pi i\theta_{j,2}k} = \sum_{j=1}^n \theta_{j,1}'e^{-2\pi i\theta_{j,2}'k} \quad |k|\le K_L.
\end{align}
Without loss of generality, we suppose $\theta_{1,2} = \theta_{1,2}', \cdots, \theta_{s,2} = \theta_{s,2}'$. 

Denote $$\phi_{K_L}(\theta_{j,2}) = e^{2\pi i\theta_{j,2}K_L}\cdot\bra{1,e^{-2\pi i\theta_{j,2}},\cdots,e^{-4\pi i\theta_{j,2}K_L}}^\top,$$ and define $$A_{\theta} = \bra{ \phi_{K_L}(\theta_{1,2}),\cdots,\phi_{K_L}(\theta_{s,2}),\phi_{K_L}(\theta_{s+1,2}),\cdots,\phi_{K_L}(\theta_{n,2}),\phi_{K_L}(\theta'_{s+1,2}),\cdots,\phi_{K_L}(\theta'_{n,2})},$$ $$\phi_{\theta} = (\theta_{1,1}-\theta_{1,1}',\cdots,\theta_{s,1}-\theta_{s,1}',\theta_{s+1,1},\cdots,\theta_{n,1},\theta_{s+1,1}',\cdots,\theta_{n,1}')^\top$$. We rewrite the equations (\ref{injective_eq_1}) in the following matrix form
\begin{align*}
    A_{\theta}\cdot \phi_{\theta} = 0.
\end{align*}
It is easy to verify that $A_{\theta}$ has full column rank, we deduce that $\theta_{s+1} =\cdots= \theta_n =\theta_{s+1}' = \cdots = \theta_n'  =0$, which contradicts to the assumption that $\theta_j\ne 0$. Thus, the only case in which there is no contradiction is $\theta_{2,j} = \theta_{2,j}'$ for all $j = 1,\cdots,n$. Then we rewrite the following equations into a matrix form 
\begin{align}
    \sum_{j=1}^n \theta_{j,1}e^{-2\pi i\theta_{j,2}k} = \sum_{j=1}^n \theta_{j,1}'e^{-2\pi i\theta_{j,2}k} \quad |k|\le K_L.
\end{align}
We derive that 
\begin{align}
     \tilde{A}_\theta \cdot \tilde{\phi_\theta} = 0, 
\end{align}
where $\tilde{A}_\theta = \bra{ \phi_{K_L}(\theta_{1,2}),\cdots,\phi_{K_L}(\theta_{1,2})}$ and $\tilde{\phi_\theta} = (\theta_{1,1}-\theta_{1,1}',\cdots,\theta_{n,1}-\theta_{n,1}')^\top$. Since $\tilde{A}_\theta$ has full column rank, we have $\tilde{\phi_\theta}=0$, which implies $\theta_{1,j} = \theta_{1,j}'$ for all $j = 1,\cdots,n$. Therefore, $\PL|_{U}$ is injective.\\

\textbf{Step 2.} Injectivity of $D\PL|_{U}$:\\

We calculate that 
\begin{align}
    &\frac{\partial \mathcal{P}_{L,k}}{\partial \theta_{j,1}} = e^{-2\pi i\theta_{j,2}k}, \\
    &\frac{\partial \mathcal{P}_{L,k}}{\partial \theta_{j,2}} = -2\pi i \cdot \theta_{j,1}k e^{-2\pi i\theta_{j,2}k}, 
\end{align}
and
\begin{align}
    D\PL(\theta) = \bra{
    \begin{matrix}
        \frac{\partial \mathcal{P}_{L,-K_L}}{\partial \theta_{1,1}}&\cdots&\frac{\partial \mathcal{P}_{L,-K_L}}{\partial \theta_{n,1}}&\frac{\partial \mathcal{P}_{L,-K_L}}{\partial \theta_{1,2}}&\cdots&\frac{\partial \mathcal{P}_{L,-K_L}}{\partial \theta_{n,2}}\\
        \vdots& &\vdots&\vdots& &\vdots\\
        \frac{\partial \mathcal{P}_{L,K_L}}{\partial \theta_{1,1}}&\cdots&\frac{\partial \mathcal{P}_{L,K_L}}{\partial \theta_{n,1}}&\frac{\partial \mathcal{P}_{L,K_L}}{\partial \theta_{1,2}}&\cdots&\frac{\partial \mathcal{P}_{L,K_L}}{\partial \theta_{n,2}}\\
    \end{matrix}
    }.
\end{align}
For any $\theta\in U$, the confluent Vandermonde matrix $D\PL$ has full column rank and thus $D\PL|_{U}$ is injective.\\

\textbf{Step 3.} The inequality $\|\hat \theta-\theta\|<C_1\cdot \sigma$ is a consequence of Theorem 3.8 and Theorem 3.10 in \cite{liu2024mathematical}. To get the result, one only needs to notice that the cutoff frequency and the difference of the Fourier transform adopted in \cite{liu2024mathematical} imply the Rayleigh limit is $\frac{\pi}{\Omega}$, which is $\frac{1}{2K_L}$ in this paper.

\textbf{Step 4.} Estimation of $\|D\PH(\theta)\|_{op}$:\\

Straightforward calculation gives that 
\begin{align}
    D\PH(\theta) = \bra{
    \begin{matrix}
        \frac{\partial \mathcal{P}_{H,-K_H}}{\partial \theta_{1,1}}&\cdots&\frac{\partial \mathcal{P}_{H,-K_H}}{\partial \theta_{n,1}}&\frac{\partial \mathcal{P}_{H,-K_H}}{\partial \theta_{1,2}}&\cdots&\frac{\partial \mathcal{P}_{H,-K_H}}{\partial \theta_{n,2}}\\
        \vdots& &\vdots&\vdots& &\vdots\\
        \frac{\partial \mathcal{P}_{H,K_H}}{\partial \theta_{1,1}}&\cdots&\frac{\partial \mathcal{P}_{H,K_H}}{\partial \theta_{n,1}}&\frac{\partial \mathcal{P}_{H,K_H}}{\partial \theta_{1,2}}&\cdots&\frac{\partial \mathcal{P}_{H,K_H}}{\partial \theta_{n,2}}\\
    \end{matrix}
    }.
\end{align}
For any given $\theta \in U$, we have
\begin{align}
    \|D\PH(\theta)\|_{op}^2 \le \|D\PH(\theta)\|_F^2 &= (2K_H+1)n+\frac{4\pi^2}{3}\sum_{j=1}^n \theta_{j,1}^2\cdot\bra{K_H(K_H+1)(2K_H+1)} \notag\\
    &\le (2K_H+1)n+\frac{4n\pi^2A_I^2}{3}\cdot\bra{K_H(K_H+1)(2K_H+1)}.
\end{align}

Meanwhile, by the definition of operator norm, we have
\begin{align}
    \|D\PH(\theta) \|_{op} \ge \| D\PH(\theta)\alpha_{l} \|, \quad l = -K_H,\cdots,K_H,
\end{align}
where $\{\alpha_l\}_{l=-K_H}^{K_H}$ is the canonical basis of $\Real^{2n}$. Then, the straightforward calculation gives
\begin{align}
    \|D\PH(\theta) \|_{op} \gtrsim K_H^{3/2}.
\end{align}

\subsection{Proof of Theorem \ref{thm: stability diff Dirac}}
We notice that $U$ is compact and convex. According to Theorem \ref{thm: abstract stability},, we only need to verify $\PL|_{U}$ is injective and $D\PL|_{U}(\theta)$ is injective for all $\theta \in U$.\\

\textbf{Step 1.} Injectivity of $\PL|_{U}$: \\

For $\theta,\theta'\in U$, we assume that $\PL(\theta) = \PL(\theta')$, i.e.
\begin{align}
     \sum_{j=1}^n \sum_{r=0}^{R_j} \theta_{r,j,1}\cdot (-2\pi i k)^r  e^{-2\pi i \theta_{j,2}k} 
    =  \sum_{j=1}^n \sum_{r=0}^{R_j} \theta_{r,j,1}'\cdot (-2\pi i k)^r  e^{-2\pi i \theta_{j,2}'k}, \quad |k|\le K_L.
\end{align}
By the similar method of the proof of Theorem \ref{thm: stability point source}, we only need to show that sources with different orders generate independent signals and the same order source with different positions generate independent signals. It then suffices to show that for $\theta_{j,2} \ne \theta'_{j,2}$, the matrix 
\[
\bra{
\begin{matrix}
    e^{2\pi i \theta_{j,2} K_L} & \cdots & (2\pi i K_L)^{R_1}e^{2\pi i \theta_{j,2} K_L} & e^{2\pi i \theta'_{j,r,2} K_L} & \cdots & (2\pi i K_L)^{R_1}e^{2\pi i \theta'_{j,r,2} K_L}\\
    \vdots & &\vdots&\vdots& & \vdots\\
    e^{-2\pi i \theta_{j,2} K_L} & \cdots & (-2\pi i K_L)^{R_2}e^{-2\pi i \theta_{j,2} K_L} & e^{-2\pi i \theta'_{j,r,2} K_L} & \cdots & (-2\pi i K_L)^{R_2}e^{-2\pi i \theta'_{j,r,2} K_L}
\end{matrix}
}
\]
has full column rank for any $R_1$ and $R_2$, and it is straightforward by its confluent Vandermonde matrix structure. Therefore, $\PL|_{U}$ is injective.\\

\textbf{Step 2.} Injectivity of $D\PL|_{U}$:\\

We calculate that 
\begin{align}
    &\frac{\partial \mathcal{P}_{L,k}}{\partial \theta_{r,j,1}}= (-2\pi i k)^r\cdot e^{-2\pi i \theta_{j,2}k},\\
    &\frac{\partial \mathcal{P}_{L,k}}{\partial \theta_{j,2}} = \theta_{r,j,1}\cdot(-2\pi i k)^{r+1}\cdot e^{-2\pi i \theta_{j,2}k},
\end{align}
and 
\begin{align*}
    D\PL(\theta) = \bra{
    \begin{matrix}
        \frac{\partial  \mathcal{P}_{L,-K_L}}{\partial \theta_{0,1,1}}&\cdots&\frac{\partial  \mathcal{P}_{L,-K_L}}{\partial \theta_{R,n_R,1}}&\frac{\partial  \mathcal{P}_{L,-K_L}}{\partial \theta_{0,1,2}}&\cdots&\frac{\partial  \mathcal{P}_{L,-K_L}}{\partial \theta_{R,n_R,2}}\\
        \vdots& &\vdots&\vdots & &\vdots\\
        \frac{\partial  \mathcal{P}_{L,K_L}}{\partial \theta_{0,1,1}}&\cdots&\frac{\partial  \mathcal{P}_{L,K_L}}{\partial \theta_{R,n_R,1}}&\frac{\partial  \mathcal{P}_{L,K_L}}{\partial \theta_{0,1,2}}&\cdots&\frac{\partial  \mathcal{P}_{L,K_L}}{\partial \theta_{R,n_R,2}}
    \end{matrix}
    }.
\end{align*}
The confluent Vandermonde matrix structure of $D\PL$ implies its injectivity for all $\theta\in U$.\\

\textbf{Step 3.} Estimation of $\|D\PH(\theta)\|_{op}$:\\

It is clear that

\begin{align*}
    D\PH(\theta) = \bra{
    \begin{matrix}
        \frac{\partial  \mathcal{P}_{L,-K_H}}{\partial \theta_{0,1,1}}&\cdots&\frac{\partial  \mathcal{P}_{L,-K_H}}{\partial \theta_{R,n_R,1}}&\frac{\partial  \mathcal{P}_{L,-K_H}}{\partial \theta_{0,1,2}}&\cdots&\frac{\partial  \mathcal{P}_{L,-K_H}}{\partial \theta_{R,n_R,2}}\\
        \vdots& &\vdots&\vdots & &\vdots\\
        \frac{\partial  \mathcal{P}_{L,K_H}}{\partial \theta_{0,1,1}}&\cdots&\frac{\partial  \mathcal{P}_{L,K_H}}{\partial \theta_{R,n_R,1}}&\frac{\partial  \mathcal{P}_{L,K_H}}{\partial \theta_{0,1,2}}&\cdots&\frac{\partial  \mathcal{P}_{L,K_H}}{\partial \theta_{R,n_R,2}}
    \end{matrix}
    }.
\end{align*}

\normalsize
For any given $\theta \in U$, we have 
\begin{align*}
    \|D\PH \|_{op}^2 \le \|D\PH \|_F^2 \le \sum_{k=-K_H}^{K_H} \sum_{r=0}^R n_r(2\pi k)^{2r}\bra{1+4\pi^2k^2A_I^2}.
\end{align*}
Meanwhile, by the definition of operator norm, we have
\begin{align}
    \|D\PH(\theta) \|_{op} \ge \| D\PH(\theta)\alpha_{l} \|, \quad l = -K_H,\cdots,K_H,
\end{align}
where $\{\alpha_l\}_{l=-K_H}^{K_H}$ is the canonical basis of $\Real^{2N}$. Then, the straightforward calculation gives
\begin{align}
    \|D\PH(\theta) \|_{op}\gtrsim \sqrt{\sum_{k=-K_H}^{K_H} k^{2R+2}} \gtrsim K_H^{R+3/2}.
\end{align}

\subsection{Proof of Theorem \ref{thm: stability gauss}}
We notice that $U$ is compact and convex. According to Theorem \ref{thm: abstract stability}, we only need to verify $\PL|_{U}$ is injective and $D\PL(\theta)$ is injective for all $\theta \in U$.\\

\textbf{Step 1.} Injectivity of $\PL|_U$:\\

It suffices to show that for $\theta_{j,2}\ne\theta_{j,2}'$ for $j = 1,\cdots,n$, we have the following matrix has full column rank:
\begin{align*}
    B_{\alpha}\left(
    \begin{array}{cc}
        M_{\theta} & M_{\theta'}
    \end{array}
    \right),
\end{align*}
where  
\begin{align*}
    B_\alpha = \operatorname{diag}\bra{e^{-2\pi^2\alpha^2\omega_{-K}^2},\cdots, e^{-2\pi^2\alpha^2\omega_{-K}^2}},
\end{align*}
and 
\begin{align*}
    M_\theta = \left(
    \begin{array}{ccc}
     e^{-2\pi i\theta_{1,2}\omega_{-K}} & \cdots & e^{-2\pi i\theta_{n,2}\omega_{-K}} \\
     \vdots & \ddots & \vdots \\
     e^{-2\pi i\theta_{1,2}\omega_{-K}} & \cdots & e^{-2\pi i\theta_{n,2}\omega_{-K}}
    \end{array}\right), \quad
    M_{\theta'} = \left(
    \begin{array}{ccc}
     e^{-2\pi i\theta'_{1,2}\omega_{-K}} & \cdots & e^{-2\pi i\theta'_{n,2}\omega_{-K}} \\
     \vdots & \ddots & \vdots \\
     e^{-2\pi i\theta'_{1,2}\omega_{-K}} & \cdots & e^{-2\pi i\theta'_{n,2}\omega_{-K}}
    \end{array}\right).
\end{align*}
Indeed, this is easy to check by noticing the Vandermonde structure and $B_\alpha$ is full rank.

\textbf{Step 2.} Injectivity of $D\PL|_U$:

We calculate that 
\begin{align}
    &\frac{\partial \mathcal{P}_{L,k}}{\partial_{j,1}} = \sqrt{2\pi\alpha^2}\cdot e^{-2\pi^2\alpha^2\omega_{k}^2}\cdot e^{-2\pi i\theta_{1,2}\omega_{k}}\\
    &\frac{\partial \mathcal{P}_{L,k}}{\partial_{j,2}} = \sqrt{2\pi\alpha^2}\cdot e^{-2\pi^2\alpha^2\omega_{k}^2}\cdot \theta_{j,1}\cdot (-2\pi i\omega_k)e^{-2\pi i\theta_{1,2}\omega_{k}}
\end{align}
We can apply the similar argument as step 1 to write the corresponding matrix, and the injectivity is the consequence of the confluent Vandermonde structure.

\textbf{Step 3.} Estimation of $\|D\PH(\theta)\|_{op}$:\\

It is clear that 
\begin{align}
    D\PH(\theta) = \bra{
    \begin{matrix}
        \frac{\partial \mathcal{P}_{H,-K_H}}{\partial \theta_{1,1}}&\cdots&\frac{\partial \mathcal{P}_{H,-K_H}}{\partial \theta_{n,1}}&\frac{\partial \mathcal{P}_{H,-K_H}}{\partial \theta_{1,2}}&\cdots&\frac{\partial \mathcal{P}_{H,-K_H}}{\partial \theta_{n,2}}\\
        \vdots& &\vdots&\vdots& &\vdots\\
        \frac{\partial \mathcal{P}_{H,K_H}}{\partial \theta_{1,1}}&\cdots&\frac{\partial \mathcal{P}_{H,K_H}}{\partial \theta_{n,1}}&\frac{\partial \mathcal{P}_{H,K_H}}{\partial \theta_{1,2}}&\cdots&\frac{\partial \mathcal{P}_{H,K_H}}{\partial \theta_{n,2}}\\
    \end{matrix}
    }.
\end{align}

For any given $\theta\in U$, we have
\begin{align*}
    \|D\PH \|_{op}^2 \le \|D\PH \|_F^2 
    \le 2\pi \alpha \sum_{j=1}^n \sum_{k=-K_H}^{K_H}\bra{1+4\pi k^2\theta_{j,1}^2}e^{-2\pi^2\alpha^2k^2}  \triangleq C_3,
\end{align*}
where $C_3$ is independent of $K_H$ due to the convergence of the series.

\subsection{Proof of Proposition \ref{prop4-4}}
The signal can be written as 
\[
y_k = (2\pi i k)^r e^{-2\pi i k z}+(2\pi i k)^r e^{-2\pi i k z'},
\]
We calculate that 
\[
\frac{\partial \mathcal{P}_{l,k}}{\partial z} = (2\pi i k)^{r+1}e^{-2\pi i k z}, \quad \frac{\partial \mathcal{P}_{l,k}}{\partial z'} = (2\pi i k)^{r+1}e^{-2\pi i k z'}.
\]
Then, the Jacobi matrix can be written as 
\[
D\PL = \bra{
\begin{array}{cc}
    (-2\pi i K_L)^{r+1}e^{2\pi i K_L z}, & (-2\pi i K_L)^{r+1}e^{2\pi i K_L z'}\\
     \vdots&\vdots \\
     (2\pi i K_L)^{r+1}e^{-2\pi i K_L z}, & (2\pi i K_L)^{r+1}e^{-2\pi i K_L z'}
\end{array}
}
\]

We calculate its normal matrix as 
\[
(D\PL)^*D\PL = \bra{
\begin{array}{cc}
    \sum_{k=-K_L}^{K_L}(2\pi k)^{2r+2}, & \sum_{k=-K_L}^{K_L}(2\pi k)^{2r+2} e^{-2\pi i k (z'-z)}\\
     \sum_{k=-K_L}^{K_L}(2\pi k)^{2r+2} e^{-2\pi i k (z-z')}, & \sum_{k=-K_L}^{K_L}(2\pi k)^{2r+2}
\end{array}
}.
\]
Thus, the minimum singular value of $D\PL$ can be estimated as follows:
\begin{align}
    \sigma_{\min}^2 &= \sum_{k=-K_L}^{K_L}(2\pi k)^{2r+2} - \left|\sum_{k=-K_L}^{K_L}(2\pi k)^{2r+2} e^{-2\pi i k (z'-z)}\right| \notag \\
    & = 2 \sum_{k=1}^{K_L}(2\pi k)^{2r+2} - 2 \left|\sum_{k=1}^{K_L}(2\pi k)^{2r+2} \cos(2\pi  k (z'-z))\right| \notag\\
    & = 2 \sum_{k=1}^{K_L}(2\pi k)^{2r+2} \bra{1-\cos(2\pi  k |z'-z|)}\notag \\
    & \ge 2 \sum_{k=1}^{K_L}(2\pi k)^{2r+2} (8k^2\Delta^2) \quad \bra{ \text{by } 1-\cos x \ge \frac{2}{\pi^2} x^2, x \in (0,\pi)} \notag \\
    & \ge C_1K_L^{2r+5}\Delta^2
\end{align}
for some constants $C_1$.
We can therefore conclude that 
\begin{align}
    \|\theta -\theta' \| \le \frac{1}{\sigma_{\min}} \|\PL(\theta)-\PL(\theta') \| \le \frac{C}{ K_L^{r+5/2}\Delta}\|\PL(\theta)-\PL(\theta') \|,
\end{align}
for some constant $C>0$.

\subsection{Proof of Propostion \ref{prop4-5}}
The signal can be written as 
\[
y_k = e^{-2\pi \alpha^2 k^2} \bra{ e^{-2\pi i k z}+e^{-2\pi i k z'}},
\]
We calculate that 
\[
\frac{\partial \mathcal{P}_{l,k}}{\partial z} = e^{-2\pi \alpha^2 k^2}(-2\pi i k)e^{-2\pi i k z}, \quad \frac{\partial \mathcal{P}_{l,k}}{\partial z'} = e^{-2\pi \alpha^2 k^2}(-2\pi i k)e^{-2\pi i k z'}.
\]
Then, the Jacobi matrix can be written as 
\[
D\PL = \bra{
\begin{array}{cc}
    e^{-2\pi \alpha^2 K_L^2}(2\pi i K_L)e^{2\pi i K_L z}, & e^{-2\pi \alpha^2 K_L^2}(2\pi i K_L)e^{-2\pi i K_L z}\\
     \vdots&\vdots \\
     e^{-2\pi \alpha^2 K_L^2}(-2\pi i K_L)e^{-2\pi i K_L z}, & e^{-2\pi \alpha^2 K_L^2}(-2\pi i K_L)e^{-2\pi i K_L z}
\end{array}
}
\]

We calculate its normal matrix as 
\[
(D\PL)^*D\PL = \bra{
\begin{array}{cc}
    \sum_{k=-K_L}^{K_L}(2\pi k)^{2}e^{-4\pi \alpha^2 k^2}, & \sum_{k=-K_L}^{K_L}(2\pi k)^{2}e^{-4\pi \alpha^2 k^2} e^{-2\pi i k (z'-z)}\\
     \sum_{k=-K_L}^{K_L}(2\pi k)^{2} e^{-2\pi i k (z-z')}, & \sum_{k=-K_L}^{K_L}(2\pi k)^{2}e^{-4\pi \alpha^2 k^2}
\end{array}
}.
\]
Thus, the minimum singular value of $D\PL$ can be estimated as follows:
\begin{align}
    \sigma_{\min}^2 &= \sum_{k=-K_L}^{K_L}(2\pi k)^{2}e^{-4\pi \alpha^2 k^2} - \left|\sum_{k=-K_L}^{K_L}(2\pi k)^{2}e^{-4\pi \alpha^2 k^2} e^{-2\pi i k (z'-z)}\right| \notag \\
    & = 2 \sum_{k=1}^{K_L}(2\pi k)^{2}e^{-4\pi \alpha^2 k^2} - 2 \left|\sum_{k=1}^{K_L}(2\pi k)^{2}e^{-4\pi \alpha^2 k^2} \cos\bra{2\pi k (z'-z)}\right| \notag\\
    & = 2 \sum_{k=1}^{K_L}(2\pi k)^{2}e^{-4\pi \alpha^2 k^2} \bra{1-\cos(2\pi  k |z'-z|)}\notag \\
    & \ge 2 \sum_{k=1}^{K_L}(2\pi k)^{2}e^{-4\pi \alpha^2 k^2} (8k^2\Delta^2) \quad \bra{ \text{by } 1-\cos x \ge \frac{2}{\pi^2} x^2, x \in (0,\pi)} \notag \\
    & \ge C_1(K_L)\Delta^2
\end{align}
for some constants $C_1(K_L)$.
We can therefore conclude that 
\begin{align}
    \|\theta -\theta' \| \le \frac{1}{\sigma_{\min}} \|\PL(\theta)-\PL(\theta') \| \le \frac{1}{ C(K_L)\Delta}\|\PL(\theta)-\PL(\theta') \|,
\end{align}
for some constant $C(K_L)>0$ with $C(K_L)\rightarrow C$ as $K_L\rightarrow \infty$.

\newpage
\bibliographystyle{ieeetr}
\bibliography{references} 
\newpage

\end{document}